\DeclareMathOperator*{\argmin}{arg\,min}
\newcommand\oprocendsymbol{\hbox{$\bullet$}}
\newcommand\oprocend{\relax\ifmmode\else\unskip\hfill\fi\oprocendsymbol}
\title{\Large \textbf{Distributed Multi-Robot Coverage Control of Non-convex Environments with Guarantees}}
\title{\Large \textbf{Approximation Algorithms for Distributed Multi-Robot Coverage in Non-Convex Environments}}
\titlerunning{Distributed Coverage of Non-convex Environments}
\author{ Armin Sadeghi, Ahmad Bilal Asghar and Stephen L. Smith}
\authorrunning{A.Sadeghi, A.B.Asghar, S.L.Smith} 
\institute{Department of Electrical and Computer Engineering\\
University of Waterloo, Waterloo, ON, Canada\\
\texttt{\{a6sadegh,abasghar,stephen.smith\}@uwaterloo.ca}}
\begin{document}
\maketitle
\begin{abstract}
	In this paper, we revisit the distributed coverage control problem with multiple robots on both metric graphs and in non-convex continuous environments. Traditionally, the solutions provided for this problem converge to a locally optimal solution with no guarantees on the quality of the solution. We consider sub-additive sensing functions, which capture the scenarios where sensing an event requires the robot to visit the event location. For these sensing functions, we provide the first constant factor approximation algorithms for the distributed coverage problem. The approximation results require twice the conventional communication range in the existing coverage algorithms. However, we show through extensive simulation results that the proposed approximation algorithms outperform several existing algorithms in convex, non-convex continuous, and discrete environments even with the conventional communication ranges. Moreover, the proposed algorithms match the state-of-the-art centralized algorithms in the solution quality. 
	 
	\begin{keywords}
		Multiple and Distributed Robots, Sensor Networks
	\end{keywords}
\end{abstract}

\section{Introduction}
\label{sec:intro}

Distributed Coverage is a very well studied problem~\cite{cortes2004coverage,durham2012discrete,sadeghi2019coverage,santos2018coverage,wang2011coverage} with extensive multi-robot applications, such as environmental monitoring~\cite{curtin1993autonomous}, and  surveillance~\cite{meguerdichian2001exposure}. The objective is to deploy a set of robots to cover an environment such that each robot services or senses the events closer to that robot than any other robot. The events arrive according to a spatial distribution, and the cost of sensing an event is a function of the distance from the robot to that event. The distributed coverage control problem is to minimize the total coverage cost of the environment. The existing distributed algorithms to solve this problem converge to a locally optimal solution with no guarantees on the quality of the solution. In this paper, we provide distributed approximation algorithms to solve the problem in non-convex continuous and discrete environments. 

The first distributed algorithm for coverage control in convex environments was proposed by Cortes \emph{et al.}~\cite{cortes2004coverage}. The algorithm utilizes Lloyd's descent to converge to a locally optimal solution, and the partition of each robot is defined using Voronoi partitioning. The robots communicate with the robots in their neighboring partitions to implement the algorithm.  

Building on the Lloyd's descent-based algorithm in~\cite{cortes2004coverage}, there has been extensive studies on the coverage control problem in non-convex environments. In~\cite{caicedo2008performing,caicedo2008coverage}, the authors map non-convex environments through a diffeomorphism to a convex region and then solve the problem using the Lloyd's algorithm~\cite{cortes2004coverage} before mapping the locally optimal solution back to the original environment. 

For non-convex polygonal environments, a distributed algorithm was presented in~\cite{breitenmoser2010voronoi}, where the Lloyd's algorithm for convex environments was combined with a local path planning algorithm to avoid obstacles. A key idea introduced in~\cite{thanou2013distributed} is to consider geodesic distance when computing partitions. In~\cite{kantaros2014visibility} and~\cite{mahboubi2012distributed}, the authors construct the Voronoi partitions based on the visibility of the robot in the presence of obstacles. Unlike the approaches mentioned above, we discretize the non-convex environment, and solve the coverage problem on the discrete environment and provide guarantees on the solution quality.

The approach of converting a continuous non-convex environment to a discrete environment is used in~\cite{durham2012discrete,alitappeh2017multi,bhattacharya2013distributed}. We utilize the same approach, but are able to characterize the cost of the solution obtained from the discretized environment in the corresponding continuous environment as a function of the sampling density. The authors in~\cite{yun2014distributed} study the coverage problem defined on an undirected graph and present a distributed algorithm that converges to a local optimum. Their algorithm requires the robots to know the information of the neighbors of their neighbors. In this paper, we make the same assumption on the communication range of the robots, but establish approximation guarantees. 

A closely related problem to the discretized coverage control is the facility location problem~\cite{jain2001approximation,shmoys2000approximation} where the objective is to minimize the cost of the robots and the total service time of the demands arriving on the vertices. A special case of this problem is the $k$-median problem~\cite{arya2004local,li2016approximating,ahmadian2013local} where the objective is to place $k$ robots on vertices of the graph to minimize the total service time. A centralized approximation algorithm was presented for the $k$-median problem in~\cite{arya2004local}, and the analysis of our distributed approximation algorithm leverages this centralized approximation algorithm. The authors in~\cite{ahmadian2013local} consider the $k$-median problem with mobile robots, namely the mobile facility location problem, and provide an approximation algorithm for the objective of minimizing a linear combination of the relocation cost of the mobile robots and the expected service time of the demands. In~\cite{sadeghi2018re}, we consider the mobile facility location problem with sequentially arriving demands where the goal is to minimize a linear combination of the relocation costs and the expected service times of the demands in a time horizon, and propose a centralized algorithm which provides solutions within a constant factor of the optimal solution.
Authors in~\cite{balcan2013distributed} provide a randomized-distributed algorithm for the $k$-median problem with constant factor approximation in Euclidean environments. In contrast, we consider more general non-convex environments and provide a deterministic approximation algorithm.

\emph{Contributions:} Our main contributions are threefold.  First, given a continuous non-convex coverage problem, we generate a corresponding instance on a metric graph, and characterize the performance of the discrete solution on the continuous problem (Section~\ref{sec:discrete-problem}). Second, we provide a constant factor approximation algorithm for the distributed coverage problem on metric graphs (Section~\ref{sec:algorithm}). To the best of our knowledge, this is the first deterministic approximation algorithm for the distributed multi-robot coverage problem. We prove the approximation results in Section~\ref{sec:analysis}. Third, we show through extensive simulations that the proposed algorithm outperforms several existing approaches in convex and non-convex environments, and matches the centralized algorithms in solution quality (Section~\ref{sec:simulations}).

\section{Continuous and Discrete Coverage Problems}
\label{sec:problem_formulation}
We begin by reviewing the coverage problems in both continuous~\cite{cortes2004coverage} and discrete environments~\cite{yun2014distributed}.

\subsection{Continuous Environment}
Consider $m$ mobile robots in a compact environment with obstacles and let $\mathcal{X}$ be the obstacle free subset of the environment.  There is an event distribution $\phi: \mathcal{X} \rightarrow \mathbb{R}_+$ defined over the environment. Let $d(p, q)$ be the length of the shortest path between two locations $p$ and $q$ in $\mathcal{X}$. The sensing cost of an event at location $p$ by a robot at $q$ is a strictly non-decreasing function $f:\mathbb{R}_+ \rightarrow \mathbb{R}_+$ of $d(p, q)$. Following the non-convex problem formulation in~\cite{breitenmoser2010voronoi}, which extends the original formulation in~\cite{cortes2004coverage}, the continuous problem is defined as the problem of finding the set of locations in the environment for Authorsthe robots that minimizes the sensing cost of the events, i.e.,
\begin{equation}
	\label{eq:cont_coverage_objective}
	\min_{Q\in \mathcal{X}^m}\mathcal{H(Q)} = \min_{Q \in \mathcal{X}^m}\int_{\mathcal{X}} \min_{q_i \in Q}f(d(p,q_i))\phi(p)dp.  
\end{equation}

Without loss of generality, in the rest of the paper we assume that $\int_{\mathcal{X}} \phi(p) dp = 1$. Observe that the best sensing cost for an event is provided by the closest robot to that event location. Then for a given configuration $Q$, we partition the environment into Voronoi subsets as follows:
\[
	V_i(Q) = \{p \in \mathcal{X} | d(p, q_i) \leq d(p, q_j) \ \forall q_j \in Q \setminus \{q_i\}\}.
\]

The robots move according to some dynamics $\dot{q}_i = g(q_i, u_i)$ where the computation of the shortest path between two configurations of the robot is tractable. Typically first order dynamics $g(r_i, u_i) = u_i$ is considered for the robots in coverage control literature~\cite{cortes2004coverage}. 
We are interested in the distributed version of the coverage problem, where the robots have local information on the other robots and each robot computes its control input locally.

\subsection{Discrete Environment}
\label{subsec:discrete_env_problem_def}

Consider a metric graph $G= (\mathcal{S}, E, c)$ where $\mathcal{S}$ is the vertex set, $E$ is the set of edges between the vertices, $c$ is the metric edge cost and let $w$ be the weight on the vertices. Given a team of $m$ robots, the coverage problem on graph $G$, is the problem of finding a set of $m$ locations to optimally cover the vertices of $G$, i.e., minimize $\mathcal{D}(Q) = \sum_{v \in \mathcal{S}} \min_{q \in Q} w(v) c(v, q)$.
For a given configuration $Q \in \mathcal{S}^m$, we can partition the vertices into $m$ subsets 
\[W_i(Q) = \{u \in \mathcal{S} | c(u, q_i) < c(u, q_j) \ \forall q_j \in Q\setminus \{q_i\}\}.\]
    
If there exists a vertex that has equal distance to two or more robots in $Q$, then the vertex is assigned to the robot with smaller unique identifier (UID). Robots travel on the edges of the graph and the control input to a robot is a sequence of edges leading to its destination vertex. 

\emph{Centralized Approximation Algorithm:} The centralized version of this problem is a well-known NP-hard problem called the $k$-median problem~\cite{arya2004local}. The best known approximation algorithm for this problem on metric graphs is a centralized local search algorithm which provides solutions within a constant factor of the optimal~\cite{ahmadian2013local}. Starting from a configuration $Q$, the centralized local search algorithm swaps $p$ vertices in $Q$ at a time with a subset of $p$ vertices in $\mathcal{S} \setminus Q$. If the new configuration improves the coverage by at least some $\epsilon_0 > 0$, then we call this move a valid local move. The procedure terminates if there are no more valid swaps improving the total sensing cost. We will refer to this local search algorithm as $\textsc{CentralizedAlg}$ in the rest of the paper. The solution obtained from $\textsc{CentralizedAlg}$ is within $3 + 2/p + o(\epsilon_0)$ of the globally optimal solution. 

We focus on the distributed version of this problem introduced in~\cite{yun2014distributed} where the robots use only the local information to compute their control input. In the following section, we establish the connection between the continuous and discrete coverage problems.

\section{From a Continuous to a Discrete Problem}
\label{sec:discrete-problem}
To establish a connection between the coverage problem in continuous and discrete environments, we first convert the continuous coverage problem to a coverage problem in a discrete environment through sampling of the environment. 

Let $\mathcal{S}$ be the set of samples of $\mathcal{X}$ with dispersion $\zeta$~\cite{lavalle2006planning}, where $\zeta$ is the maximum distance of any point in the environment $\mathcal{X}$ from the closest point in $\mathcal{S}$, i.e.,
$
\zeta = \max_{p \in \mathcal{X}} \min_{u \in \mathcal{S}} d(p, u),
$
(See Figure~\ref{fig:discretization}).
We construct a metric graph $G = (\mathcal{S}, E, c)$ on sampled locations $\mathcal{S}$, where $E$ is the edge set and $c$ is a function assigning costs to the edges of the graph. The cost of an edge between two sampled locations $u, v \in \mathcal{S}$ is $c(u, v) = f(d(u, v))$. Let $\sigma(v)$ for $v \in \mathcal{S}$ be the points in $\mathcal{X}$ closer to $v$ than other samples in $\mathcal{S}$, i.e., 
\[\sigma(v) = \{p\in \mathcal{X}| d(p, v) \leq d(p, u) \ \forall u \in \mathcal{S}\setminus \{v\}\}.\]
With a slight abuse of notation, let $\sigma^{-1}(p)$  be the closest sample in $\mathcal{S}$ to $p \in \mathcal{X}$. The function $w: \mathcal{S} \rightarrow \mathbb{R}_{+}$ assigning weights to the vertices of the graph is $w(v) = \int_{p \in \sigma(v)} \phi(p)dp$. We assume the following property on the sensing function.
\begin{figure}[t]
	\centering
	\includegraphics[angle=-90,width=0.45\textwidth]{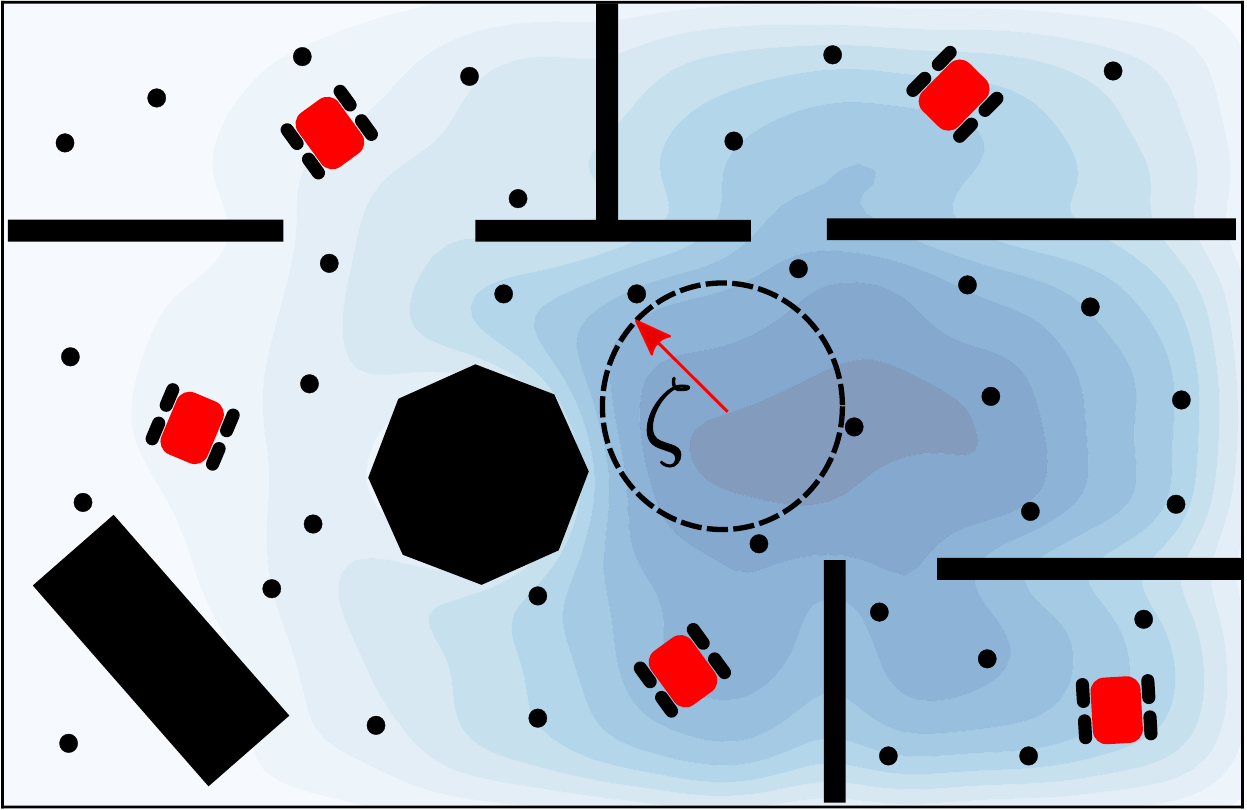}
	\caption{Sampled locations in an environment with dispersion $\zeta$.}
	\label{fig:discretization}
\end{figure}
\begin{assumption}[Subadditivity of sensing function]
	\label{assump:metric-sensing}
	We assume that the sensing cost function $f$ is a sub-additive function, i.e., \[f(d(p, u) + d(u, v)) \leq f(d(p, u)) + f(d(u, v)).\] 
\end{assumption}
For instance $f(x) = \sqrt{x}$ and $f(x) = x$ are  sub-additive functions.
\begin{remark}
	In applications such as dynamic vehicle routing problems~\cite{bertsimas1991stochastic,bullo2011dynamic} and facility location problems~\cite{shmoys2000approximation} where the sensing cost is determined by the distance traveled by the clients, the sensing function falls under Assumption~\ref{assump:metric-sensing}. 
\end{remark}

Due to Assumption~\ref{assump:metric-sensing}, the cost function $c$ on the edges of the graph $G$ satisfies the triangle inequality, i.e., for all $u, v, z$ in $\mathcal{S}$
\begin{align*}
	c(u, v) & = f(d(u, v)) \leq f(d(u, z) + d(z, v))            \\
	        & \leq f(d(u, z)) + f(d(z, v)) = c(u, z) + c(z, v). 
\end{align*}

The following result establishes a connection between the sensing costs of an approximate solution to the discrete coverage problem and the optimal coverage in continuous environment.

\begin{theorem}
	\label{thm:proof-cont-approx}
	Consider a continuous coverage problem on environment $\mathcal{X}$ with optimal solution $S^*$, and its corresponding discrete instance obtained through the set of samples $\mathcal{S}$ with dispersion $\zeta$. Then if $Q$ is the solution obtained from an $\alpha$-approximation algorithm for the discrete coverage problem instance,  the sensing cost of $Q$ on the corresponding continuous problem is $\mathcal{H}(Q) \leq \alpha \mathcal{H}(S^*) + o(f(\zeta))$.
\end{theorem}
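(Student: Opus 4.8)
The plan is to sandwich the continuous objective $\mathcal{H}$ and the discrete objective $\mathcal{D}$ against one another, using two ingredients: the dispersion bound $d(p,\sigma^{-1}(p)) \le \zeta$ valid for every $p \in \mathcal{X}$, and the subadditivity of $f$ (Assumption~\ref{assump:metric-sensing}). Combined with the monotonicity of $f$, these give, for $v = \sigma^{-1}(p)$ and any center $q$, the two pointwise estimates $f(d(p,q)) \le f(d(v,q)) + f(\zeta)$ and $f(d(v,q)) \le f(d(p,q)) + f(\zeta)$, since the extra $f(d(p,v))$ term is bounded by $f(\zeta)$. The argument then proceeds through three links: (i) relate the \emph{continuous} cost of the returned solution $Q$ to its \emph{discrete} cost; (ii) invoke the $\alpha$-approximation guarantee on the discrete instance; and (iii) relate the discrete optimum back to the continuous optimum $S^*$ by rounding $S^*$ onto the sample set $\mathcal{S}$.

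For link (i), I would decompose $\mathcal{H}(Q) = \sum_{v\in\mathcal{S}} \int_{\sigma(v)} \min_{q\in Q} f(d(p,q))\,\phi(p)\,dp$ over the sample cells $\sigma(v)$. On each cell, for $p \in \sigma(v)$ the first pointwise estimate gives $\min_{q\in Q} f(d(p,q)) \le \min_{q\in Q} f(d(v,q)) + f(\zeta)$. Integrating, substituting $w(v) = \int_{\sigma(v)}\phi$, and using the normalization $\int_{\mathcal{X}}\phi = \sum_v w(v) = 1$ yields $\mathcal{H}(Q) \le \mathcal{D}(Q) + f(\zeta)$.

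For link (iii), let $\hat S = \{\sigma^{-1}(s) : s \in S^*\} \subseteq \mathcal{S}$ be the rounding of the continuous optimum onto the samples. Since $\hat S$ is feasible for the discrete problem, $\mathcal{D}(Q_d^*) \le \mathcal{D}(\hat S)$, where $Q_d^*$ is the discrete optimum. Bounding $\mathcal{D}(\hat S)$ requires two applications of a pointwise estimate in sequence: first, for each vertex $v$, replacing its nearest center in $S^*$ by that center's rounded image costs at most $f(\zeta)$ (using $d(s,\sigma^{-1}(s)) \le \zeta$); second, re-expanding $\mathcal{D}$ into an integral over the cells $\sigma(v)$ and moving the reference point from $v$ back to $p \in \sigma(v)$ costs another $f(\zeta)$. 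With the normalization this gives $\mathcal{D}(\hat S) \le \mathcal{H}(S^*) + 2f(\zeta)$. Chaining the three links then produces $\mathcal{H}(Q) \le \mathcal{D}(Q) + f(\zeta) \le \alpha\,\mathcal{D}(Q_d^*) + f(\zeta) \le \alpha\,\mathcal{D}(\hat S) + f(\zeta) \le \alpha\,\mathcal{H}(S^*) + (2\alpha+1)f(\zeta)$.

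The main obstacle is link (iii): $S^*$ is an arbitrary element of $\mathcal{X}^m$ and need not lie on the graph, so the discrete optimum cannot be compared to it directly; one must exhibit the explicit feasible surrogate $\hat S$ and absorb both rounding errors (center-to-sample and vertex-to-point) without letting the bound scale with $|\mathcal{S}|$. This is precisely where the normalization $\sum_v w(v)=1$ is essential, since each $f(\zeta)$ term enters multiplied by the total weight rather than by the number of samples, keeping the additive error proportional to $f(\zeta)$. The analysis yields an explicit $(2\alpha+1)f(\zeta)$, i.e.\ an $O(f(\zeta))$ additive term which vanishes as the sampling is refined ($\zeta \to 0$), matching the lower-order error claimed in the statement; the only remaining care is in tracking the exact constant multiplying $f(\zeta)$.
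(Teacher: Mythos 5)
Your proposal is correct and follows essentially the same route as the paper: bound $\mathcal{H}(Q) \leq \mathcal{D}(Q) + f(\zeta)$ via subadditivity and the dispersion bound, invoke the $\alpha$-approximation guarantee against the rounded continuous optimum (your $\hat{S}$ is exactly the paper's $S^*_G$), and chain the inequalities. The only difference is in the last link, where you bound $\mathcal{D}(\hat{S}) \leq \mathcal{H}(S^*) + 2f(\zeta)$ directly with two pointwise estimates instead of passing through the intermediate quantity $\mathcal{H}(S^*_G)$ as the paper does; this yields the slightly sharper additive constant $(2\alpha+1)f(\zeta)$ versus the paper's $(4\alpha+1)f(\zeta)$, both of which match the claimed $O(f(\zeta))$ error.
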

\begin{proof}
	We have,
	\begin{align}
		\mathcal{H}(Q) & = \int_{\mathcal{X}} \min_{q_i \in Q} f(d(q_i, p)) \phi(p) dp\nonumber                                                                                                  \\
		               & \leq \int_{\mathcal{X}} \min_{q_i \in Q} f(d(q_i, \sigma^{-1}(p)) + d(\sigma^{-1}(p), p))) \phi(p) dp\quad \text{(triangle inequality)}\nonumber                        \\
		               & \leq \int_{\mathcal{X}} \min_{q_i \in Q} [f(d(q_i, \sigma^{-1}(p)))+ f(d(\sigma^{-1}(p), p))] \phi(p) dp \quad \text{(Assumption~\ref{assump:metric-sensing})}\nonumber \\
		               & = \int_{\mathcal{X}} \min_{q_i \in Q} f(d(q_i, \sigma^{-1}(p)))\phi(p) dp+  \int_{\mathcal{X}}f(d(\sigma^{-1}(p), p)) \phi(p) dp\nonumber                               \\
		               & \leq \mathcal{D}(Q)+  f(\zeta) \int_{\mathcal{X}} \phi(p) dp = \mathcal{D}(Q) + f(\zeta)\label{eq:cont-proof-1}.                                                        
	\end{align}
	Let $S^* = \{q_i^*| i \in [m]\}$ be the optimal configuration of the continuous problem, and $S^*_G$ be the configuration constructed by moving each robot location in $S^*$ to the closest sampled location in $\mathcal{S}$.
	Also note that,
	\begin{align}
		\mathcal{D}(S^*_G) & = \sum_{q_i \in S^*_G} \sum_{u \in W_i} c(u, q_i)w(u) = \sum_{i \in[m]} \sum_{u \in W_i} c(u, q_i)\int_{p \in \sigma(u)}\phi(p)dp\nonumber      \\
		                   & =\sum_{q_i \in S^*_G} \sum_{u \in W_i} \int_{p \in \sigma(u)}f(d(u, q_i))\phi(p)dp\nonumber                                                     \\
		                   & \leq \sum_{q_i \in S^*_G} \sum_{u \in W_i} \int_{p \in \sigma(u)}[\min_{q_j \in S^*_G}f(d(q_j, p)) + 2f(d(u, p)) + f(d(p,u))]\phi(p)dp\nonumber \\
		                   & \leq \mathcal{H}(S^*_G)  + 3f(\zeta)\label{eq:cont-proof-4},                                                                                    
	\end{align}
	where the first inequality is due to triangle inequality and Assumption~\ref{assump:metric-sensing}. Furthermore, we have,
	\begin{align}
		\mathcal{H}(S^*_G) & = \sum_{q_i \in S^*_G}\int_{V_i(S^*_G)} f(q_i, p) \phi(p)dp \leq \sum_{q_i \in S^*_G}\int_{V_i(S^*)} f(q_i, p) \phi(p)dp\nonumber \\
		                   & \leq \sum_{q_i \in S^*_G}\int_{V_i(S^*)} f(d(q_i^*, p)) \phi(p)dp + \int_{\mathcal{X}} f(d(q_i, q_i^*))\phi(p)dp \nonumber        \\
		                   & \leq \mathcal{H}(S^*) + f(\zeta) \label{eq:cont-proof-5},                                                                         
	\end{align}
	where the second inequality is due to triangle inequality and Assumption~\ref{assump:metric-sensing}.
	Let $Q^*_G$ be the optimal solution to the discrete coverage problem on graph $G$, then $D(Q) \leq \alpha \mathcal{D}(Q^*_G) \leq \alpha \mathcal{D}(S^*_G)$. Therefore, by Equations~\eqref{eq:cont-proof-1},~\eqref{eq:cont-proof-4} and~\eqref{eq:cont-proof-5}, we have, $\mathcal{H}(Q) \leq \alpha \mathcal{H}(S^*) + (4\alpha + 1)f(\zeta)$.\qed
\end{proof}

A $5$-approximation algorithm for the centralized coverage on metric graphs is provided in~\cite{arya2004local}. In the following section, we provide the first distributed approximation algorithm for the coverage in metric graphs.

\section{Distributed Algorithm On Graphs}
\label{sec:algorithm}
In distributed coverage control algorithms for continuous environments, and their adaptations to discrete environments, the algorithm drives each robot to the position inside its partition such that the sensing cost of its partition is minimized, i.e., the centroid of its Voronoi cell in the continuous problem. Although these algorithms converge to locally optimal solutions, there are no global guarantees on the quality of the solution. The following example provides a graph construction where such ``move to centroid'' algorithms perform poorly.

\begin{figure}
	\centering
	\begin{subfigure}{0.49\linewidth}
		\begin{center}
			\includegraphics[width=.7\linewidth]{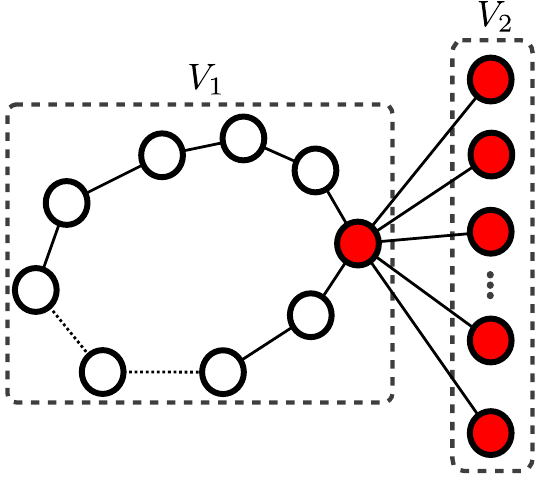}
			\caption{Locally optimal configuration under move to centroid control law}
			\label{fig:example-naive-algorithm-a}
		\end{center}
	\end{subfigure}
	\begin{subfigure}{0.49\linewidth}
		\begin{center}
			\includegraphics[width=.7\linewidth]{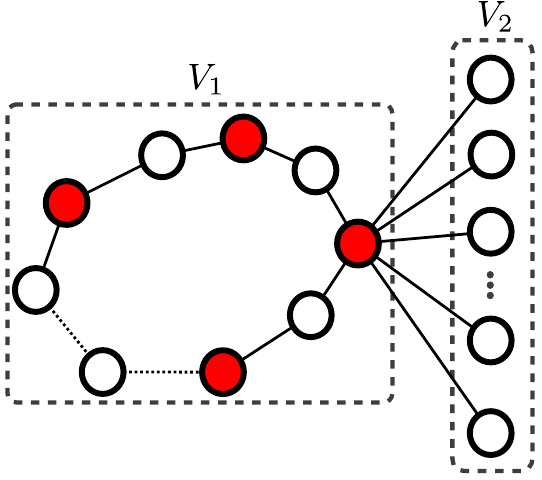}
			\caption{A better configuration}
			\label{fig:example-naive-algorithm-b}
		\end{center}
	\end{subfigure}
	\caption{Example environment with $3n + 1$ vertices and $n + 1$ robots}
	\label{fig:example-naive-algorithm}
\end{figure}

\begin{example}
	Consider the environment shown in Figure~\ref{fig:example-naive-algorithm} with $3n + 1$ vertices, $n + 1$ robots and unit costs for the shown edges. We consider the metric completion of the shown graph. The vertices are partitioned into two subsets: 1) $V_1$ with $2n + 1$ vertices and unit weights on the vertices and  2) $V_2$ with $n$ vertices of weights $\epsilon$ for some $0 < \epsilon \ll 1$. The highlighted vertices show the configuration of the robots. The configuration in Figure~\ref{fig:example-naive-algorithm-a} is a locally optimal solution under the move to centroid control law with global cost of $n(n+1)$. However, the configuration shown in Figure~\ref{fig:example-naive-algorithm-b} provides a global cost of $n + n\epsilon$. Therefore, the locally optimal solution provided by the move to centroid algorithm provides a solution with cost at least $\frac{n +1}{1 + \epsilon}$ of the optimal cost on the shown instance.
\end{example}

\subsection{High-level Idea}
\label{sec:high-level-idea}

The basic idea of our distributed coverage algorithm is to imitate the local-search algorithm for the $k$-median problem (See Section~\ref{subsec:discrete_env_problem_def}), namely \textsc{CentralizedAlg}, in a distributed manner.  The challenge in performing a local move in the distributed manner is that the robots are only aware of the partitions of their neighboring robots, therefore, the effect of a local move on the global objective is not known to the robots. However, we break down a local move in \textsc{CentralizedAlg} into a sequence of moves between neighbors. Let robot $j$ with position $q_j$ and neighbors $\mathcal{N}(j)$ be the closest robot to vertex $v$. Then a local move of \textsc{CentralizedAlg} swapping the position $q_i$ of robot $i$ with vertex $v$ is equivalent to a sequence of swaps inside $Q$ between the neighboring robots and a move from $q_j$ to $v$.  Figure~\ref{fig:k-median-move} shows an example of a local move in the centralized algorithm performed by a sequence of local moves. 

For this distributed coverage algorithm, we define the minimum communication range and neighbouring robots as follows:

\begin{figure}[t]
	\centering
	\includegraphics[width=.45\linewidth]{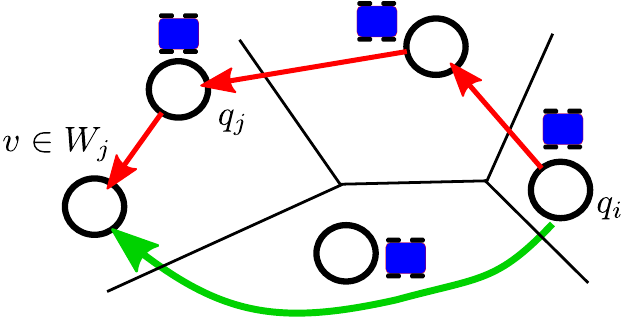}
	\caption{Local move in the centralized algorithm (green) and its equivalent sequence of moves in the distributed algorithm (red).}
	\label{fig:k-median-move}
\end{figure}

\begin{definition}[Neighbour robots]
	\label{assump:connection-assumption}
	Given a configuration $Q$, the set of neighbours of robot $i$ is defined as \[\mathcal{N}(i) = \{j \in [m]| d(q_i,q_j)\leq 4 \max \{\max_{p \in V_i} d(p, q_i), \max_{p \in V_j} d(p, q_j)\}\}\]
	where $V_i$ is the Voronoi partition of robot $i$ in the continuous environment.
\end{definition}

\begin{remark}
	The conventional definition of neighbours in the literature~\cite{cortes2004coverage} is that two robots are neighbours if the intersection of their Voronoi cell boundaries is not empty. Therefore, the distance of two neighbouring robots $i$ and $j$ can be  $2\max\{\max_{p \in V_i}d(q_i, p), \max_{p \in V_j}d(q_j, p)\}$.
	In~\cite{yun2014distributed}, authors show that in environments represented as graphs, with the conventional communication range, a move inside a robots partition might change  the partition of the neighbours of their neighbours. Therefore, they assume that the robots communicate with the neighbours of their neighbours, which is analogous to twice the communication range needed to implement the Lloyd's descent-based algorithms in continuous environments. In Section~\ref{sec:simulations}, we evaluate the performance of the algorithm with both the conventional and our definition of neighbours.
\end{remark}

We extend the definition of neighbouring robots in a continuous environment to capture the cases where the graph instance for the discrete coverage problem is given and the underlying continuous environment is unknown. Consider a graph instance $G = (V, E, c)$, then for each edge $(u, v) \in E$ we add a dummy vertex $z$ with zero weight and replace edge $(u, v)$ by two edges $(u, z)$ and $(z, v)$ such that $c(u, v) = 2c(u, z) = 2c(z, v)$. We let $c(u, v)$ for $(u, v) \notin E$ be the length of the shortest path in $G$ between $u, v$. Let $W_i$ be the partition of robot $i$, in the resulting graph. Then the equivalent definition of the neighbouring robots is given as follows:

\begin{definition}[Neighbour robots in graphs]
	\label{assump:connection-assumption}
	Given a configuration $Q$, the set of neighbours of robot $i$ is defined as \[\mathcal{N}(i) = \{j \in [m]| c(q_i,q_j)\leq 4 \max \{\max_{p \in W_i} c(p, q_i), \max_{p \in W_j} c(p, q_j)\}\}.\]
\end{definition}

\subsection{Detailed Description}
\label{sec:detailed-description-alg}

We are now ready to provide a detailed description of the proposed algorithm.

\emph{Algorithm Framework (Algorithm~\ref{alg:sender}):} For the ease of presentation, we provide a description of the algorithm in which the robots perform local moves sequentially. Each robot is assigned a unique identifier UID. Starting from an active robot, say robot $i$, the robot will make the possible local moves using Algorithm~\ref{alg:local-move}. If it can not make a local move, the robot will become inactive and will send a completion message to the neighbouring robots via \textsc{SendCompletionMessage} (line~\ref{algln:completionmessage} of Algorithm~\ref{alg:sender}). After execution of the local move by a robot, the robot becomes inactive. The next active robot to execute the local move can be selected in a distributed manner using a token passing algorithm~\cite{lemaire2004distributed}, or any other method that ensures each robot gets a turn at making a local move. The process terminates when all the robots become inactive. 

\begin{algorithm}[t]
	\begin{algorithmic}[1]
		\State{Each robot sets itself to active}
		\While{there exists an active robot}
		\For{any active robot $i\in \{1,\ldots,m\}$ }
		\If{$\sum_{u\in W_i} c(u, q_i) > 0$}
		\State{\textsc{LocalMove($i$)}}
		\State{\textsc{SendCompletionMessage}()}\label{algln:completionmessage}
		\EndIf
		\State {Robot $i$ deactivates}
		\EndFor
		\EndWhile
	\end{algorithmic}
	\caption{\textsc{DistributedCoverageAlgorithm}}
	\label{alg:sender}
\end{algorithm}

\emph{Local Move of Robot $i$ (Algorithm~\ref{alg:local-move}):} At an iteration of Algorithm~\ref{alg:local-move}, let the current configuration of robots be given by $Q = \{q_1,\ldots,q_m\}$ where the vertices in the partition of robot $i$ are given by $W_i$. The robot $i$ considers moving to a vertex $v\in W_i$ from its current vertex $q_i$. This move can only change the sensing cost of the vertices in the neighbouring robots' partitions (See Lemma~\ref{lem:connection-lemma} in Section~\ref{sec:analysis}). Hence, the robot $i$ can calculate the new neighboring partitions after a potential move to $v$. In line~\ref{algln:delta} of Algorithm~\ref{alg:local-move}, robot $i$ calculates the change in local objective $\delta_v$ due to this move for all $v\in W_i$. Since only robot $i$ is executing Algorithm~\ref{alg:local-move} at the current time, $\delta_v$ also represents the change in the global objective function. If $\min_{v\in W_i} \delta_v \leq -\epsilon_0$, robot $i$ moves to $q_i' = \argmin\delta_v$ and the iteration terminates
(local move type 1). If there is no valid local move of type 1, then robot $i$ calculates the change in local objective if it moves to $v$ and a new robot appears at $q_i$, i.e.,
\begin{equation}
	\label{eq:rho_v}
	\rho_v =  \sum_{j\in\mathcal{N}(i)}\sum_{u\in R_j(v)} w(u)[c(u,v) -  c(u,q_j)],
\end{equation}

where $R_j(v) = \{u \in W_j| c(u, q_j) > c(u, v)\}$ represents the vertices in the partition of robot $j$ that are closer to $v$ than the robot $j$ at $q_j$. Robot $i$ then passes the message with the set $\Gamma_i= \{\rho_v| v \in W_{i}\}$ and a counter set to $1$ to all its neighbors (line~\ref{algln:send-message} of Algorithm~\ref{alg:local-move}) and waits for a response (line~\ref{algln:receive-message} of Algorithm~\ref{alg:local-move}). If the response is a rejection from all the neighbors, Algorithm~\ref{alg:local-move} terminates. Otherwise it selects the acceptance message with the largest change in the objective and moves to the corresponding vertex. It also sends an acknowledgement message to the neighbor $k$ whose message was selected so that robot $k$ can move to $q_i$. 

\begin{algorithm}[t]
	\begin{algorithmic}[1]
		\While{$\exists$ a local move}
		\State Calculate $\delta_v$ for all vertices in $W_i$ \label{algln:delta}
		\If {$\min_{v} \delta_v \leq -\epsilon_0$}
		\State {Move to $v$}
		\Else
		\State Calculate $\Gamma_i= \{\rho_v| v \in W_{i}\}$ \Comment{Using Equation~\eqref{eq:rho_v}}
		\State \textsc{SendMessage}($\Gamma_i$, $1$) \label{algln:send-message}  
		\State \textsc{ReceiveMessage}() \label{algln:receive-message}
		\State \textsc{SendAcknowledgement}()
		\EndIf
		\EndWhile
	\end{algorithmic}
	\caption{\textsc{LocalMove($i$)}}
	\label{alg:local-move}
\end{algorithm}

\emph{Response of Other Robots (Algorithm~\ref{alg:receiver}):} When a robot $k$ receives messages from its neighbors, it follows Algorithm~\ref{alg:receiver}. Since messages from only one sender robot are propagating through the system at any time, it can select the message with the smallest counter value if it receives messages from multiple neighbors. The neighbor who sent the message with the smallest counter value is called the parent of robot $k$. It sends back a rejection message to all other neighbors. If the counter value of the message was one, it means that the message originated from its neighboring robot, say $i$. Then robot $k$ calculates the change in the sensing costs  of the vertices in $W_k\setminus R_k(v)$ if it moves to vertex $q_i$ and robot $i$ moves to vertex $v$ resulting in configuration $Q' = \{q_j | j \in \mathcal{N}(k)\}\cup\{v\}$, i.e., 
\begin{equation}
	\label{eq:ell_v}
	\ell_v = \sum_{u\in W_k\setminus R_k(v)}\min_{q\in Q'}[c(u,q) - c(u, q_k) ]w(u),
\end{equation}

If $\min_v (\rho_v+\ell_v) \leq -\epsilon_0$, robot $k$ decides to move to $q_i$ and sends an acceptance message to robot $i$ with the vertex $\argmin_v \rho_v+\ell_v$ and the change associated with this move. Otherwise it increments the counter and sends the message with $\Gamma_i$ to its neighbors. If the counter value in a message is greater than one, robot $k$ calculates $\ell$ as follows:
\begin{equation}
	\label{eq:ell}
	\ell = \sum_{u\in W_k}\min_{j\in\mathcal{N}(k)}[c(u,q_j) - c(u, q_k)]w(u)
\end{equation}

If $\min_v (\rho_v+\ell)\leq -\epsilon_0$, robot $k$ sends an acceptance message back to its parent. Otherwise it increments the counter and sends the message to its neighbors.

\begin{algorithm}[t]
	\hspace*{\algorithmicindent} \textbf{Input: } \texttt{message} = ($\Gamma_i$, \texttt{MessageCounter})
	\begin{algorithmic}[1]
		\If{\texttt{MessageCounter} = $1$}
		\State{Calculate $\ell_v$ for all $v$ in the message} \label{algln:line-l-v}\Comment{Equation~\eqref{eq:ell_v}}
		\If{$\exists v$ in the message with $\rho_v + \ell_v \leq -\epsilon_0$}
		\State{send acceptance message to parent}
		\Else
		\State{\textsc{SendMessage}($\Gamma_i$, \texttt{MessageCounter} + 1)}
		\EndIf
		\Else
		\State{Calculate $\ell$} \label{algln:line-l}\Comment{Equation~\eqref{eq:ell}}
		\If{$\exists v$ in the message with $\rho_v + \ell \leq -\epsilon_0$}
		\State{send acceptance message to parent}
		\Else
		\State{\textsc{SendMessage}($\Gamma_i$, \texttt{MessageCounter} + 1)}
		\EndIf
		\EndIf
		\State \textsc{ReceiveMessage}() \label{algln:rcvmsg}
	\end{algorithmic}
	\caption{\textsc{Receiver}}
	\label{alg:receiver}
\end{algorithm}

In function \textsc{ReceiveMessage} in line~\ref{algln:receive-message} of Algorithm~\ref{alg:local-move} and line~\ref{algln:rcvmsg} of Algorithm~\ref{alg:receiver}, if any robot receives at least one acceptance message from its neighbors, it passes the message with lowest increase in sensing cost value to its parent.  If it receives rejection messages from all its neighbors, it sends back a rejection message to its parent. Robot $i$ selects the move with maximum improvement in the sensing cost and sends back an acknowledgment using \textsc{SendAcknoledgment} to the accepted messages. Robots that receive the acknowledgment move to their parent's location. If there is no more local move available, robot $i$ sends a completion message using \textsc{SendCompletionMessage} to the neighbouring robots which will be propagated to all the robots. Then the next active robot executes \textsc{LocalMove}.

\section{Analysis of the Algorithm}
\label{sec:analysis}
In this section, we provide analysis on the quality of the solutions provided by the proposed algorithm. 
\subsection{Correctness and Approximation Factor}
Prior to providing the main results on the correctness and approximation factor of the algorithm, we provide two results on the change in the sensing cost of vertices in the partitions of the neighbouring robots with a move of a robot. 

The following result shows that a move by a robot inside its partition can only change the sensing cost of the vertices in its neighbouring partitions.
\begin{lemma}
	\label{lem:connection-lemma}
	Consider a vertex $z \in W_j$ where robot $j$ at position $q_j$ is the closest robot to $z$. Then robot $j$  is closer to $z$ than any vertex in the partition of a non-neighbour robot, i.e., $c(z, q_j) \leq \min_{i \notin \mathcal{N}(j)} \min_{u \in W_i} c(z, u)$.
\end{lemma}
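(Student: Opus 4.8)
The plan is to derive the claim directly from the triangle inequality satisfied by $c$ (established earlier from Assumption~\ref{assump:metric-sensing}) together with the factor of $4$ built into the definition of the neighbour set. Throughout I would write $r_j = \max_{p \in W_j} c(p, q_j)$ and $r_i = \max_{p \in W_i} c(p, q_i)$ for the ``radii'' of the two partitions. Since $z \in W_j$ we have $c(z, q_j) \le r_j$, and because the neighbour relation is symmetric (both $c(q_i,q_j)$ and $\max\{r_i,r_j\}$ are symmetric in $i,j$), the hypothesis $i \notin \mathcal{N}(j)$ is equivalent to the strict inequality $c(q_i, q_j) > 4\max\{r_i, r_j\}$.

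First I would fix an arbitrary non-neighbour $i \notin \mathcal{N}(j)$ and an arbitrary vertex $u \in W_i$, and apply the triangle inequality along the chain $q_j \to z \to u \to q_i$, namely
\[
c(q_i, q_j) \le c(z, q_j) + c(z, u) + c(u, q_i).
\]
Rearranging and substituting $c(z, q_j) \le r_j$ together with $c(u, q_i) \le r_i$ (the latter because $u \in W_i$) yields the lower bound $c(z, u) \ge c(q_i, q_j) - r_i - r_j$.

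The final step substitutes the non-neighbour inequality. Since $r_i + r_j \le 2\max\{r_i, r_j\}$, I obtain
\[
c(z, u) > 4\max\{r_i, r_j\} - 2\max\{r_i, r_j\} = 2\max\{r_i, r_j\} \ge 2 r_j \ge 2 c(z, q_j) \ge c(z, q_j),
\]
which is in fact slightly stronger than the stated weak inequality. Taking the minimum over all $u \in W_i$ and all $i \notin \mathcal{N}(j)$ then gives $c(z,q_j) \le \min_{i\notin\mathcal{N}(j)}\min_{u\in W_i} c(z,u)$, as required.

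I do not expect a genuine obstacle here; the substance of the lemma is simply that the constant $4$ in the neighbour definition is exactly large enough to absorb both radii $r_i$ and $r_j$ and still leave a surplus that dominates $c(z, q_j)$. The only points needing care are (i) invoking symmetry of the neighbour relation so that $i \notin \mathcal{N}(j)$ produces a bound on $c(q_i,q_j)$ involving \emph{both} radii, and (ii) choosing the triangle-inequality path that routes through both $z$ and $u$, since a shorter chain would not expose the quantity $c(z,u)$ that must be bounded from below.
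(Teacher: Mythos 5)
Your proof is correct, but it takes a genuinely different route from the paper's. The paper argues by contradiction: assuming some $u \in W_i$ with $c(u,z) < c(q_j,z)$ for a non-neighbour $i$, it first gets $d(q_j,u) \le d(u,z) + d(q_j,z) \le 2\,d(q_j,z)$, and then---this is the step you replace---uses the fact that $u$ lies in robot $i$'s partition, so $q_i$ is at least as close to $u$ as $q_j$ is, to conclude $d(q_i,q_j) \le 2\,d(q_j,u) \le 4\,d(q_j,z)$. Since $d(q_j,z)$ is at most robot $j$'s own cell radius, this already contradicts $i \notin \mathcal{N}(j)$; the paper never needs to bound the term $c(u,q_i)$ by cell $i$'s radius $r_i$, and its factor of $4$ is consumed entirely by $4\,d(q_j,z) \le 4 r_j$. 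You instead bound both end terms of the chain by the respective radii ($c(z,q_j)\le r_j$ and $c(u,q_i)\le r_i$) and invoke symmetry of the neighbour relation together with $r_i + r_j \le 2\max\{r_i,r_j\}$. Both arguments are sound; yours is direct, yields the strict inequality $c(z,u) > c(z,q_j)$, and (read as a contrapositive) would even go through with the constant $3$ in place of $4$, whereas the paper's exploitation of the Voronoi property of $u$ shows the conclusion holds under an even weaker, single-radius neighbour threshold---so each proof exposes slack in a different place. One caveat: you work with the graph-based neighbour definition (radii measured by $c$ over $W_i$), while the paper runs its proof with the continuous definition ($d$ over the Voronoi cells $V_i$) and then remarks that it covers both cases; your argument adapts to the continuous definition by running the same chain in $d$ with $r_i = \max_{p\in V_i} d(p,q_i)$, using that $z \in W_j$ implies $d(z,q_j)\le r_j$, and applying monotonicity of $f$ at the end to pass from $d(z,u) > d(z,q_j)$ to $c(z,u) \ge c(z,q_j)$, but this case should be stated explicitly since the lemma is invoked under either communication model.
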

\begin{proof}
    Proof by contradiction. Suppose there exists a move to vertex $v \in W_i$ by robot $i$ that changes the sensing cost of a vertex $z \in W_j$ of a non-neighbour robot $j$, i.e., $c(v, z) < c(q_j, z)$  which is equivalent to $d(v, z) \leq d(q_j, z)$ by the monotonicity of function $f$.
By the triangle inequality, we have,
\begin{equation}
\label{eq:connect-lem-proof-1}
d(q_j,v) \leq d(v, z) + d(q_j, z) \leq 2d(q_j, z). 
\end{equation}

Observe that $v \in W_i$, then $c(q_j , v) \leq c(q_i, v)$ which implies $d(q_j , v) \leq d(q_i, v)$. By Equation~\eqref{eq:connect-lem-proof-1}, we have $d(q_i,q_j) \leq d(q_j, v) + d(q_i, v) \leq 2d(q_j, v) \leq 4d(q_j, z)$. 

Then by the definition of the neighbouring robots in Section 4.1, robots $i$ and $j$ are neighbours. This is a contradiction.\qed

Observe that the proof of Lemma 1 holds for both definitions of neighbouring robots in continuous and discrete environments.

\end{proof}

Also, the following result shows that if a robot $i$ moves anywhere in the graph, then the vertices previously in $W_i$ will be assigned to robot in $\mathcal{N}(i)$.
\begin{lemma}
	\label{lem:connection-lemma-2}
	For any vertex $z \in W_i$, there exists a robot $j \in \mathcal{N}(i)$ at $q_j$ where 
	$c(z , q_j) \leq c(z, q_k)$ for all $ k \notin \mathcal{N}(i)$.
\end{lemma}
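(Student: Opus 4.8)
The plan is to reduce the statement to showing that the closest robot to $z$ other than $i$ is itself a neighbour of $i$; this immediately yields the claim, since that robot is then at least as close to $z$ as every $q_k$ with $k\notin\mathcal{N}(i)$. Concretely, let $j=\argmin_{k\neq i}c(z,q_k)$ and write $r_i=\max_{p\in W_i}c(p,q_i)$ and $r_j=\max_{p\in W_j}c(p,q_j)$. Because $z\in W_i$ we have $c(z,q_i)\le r_i$, so by the triangle inequality it suffices to prove $c(z,q_j)\le 3r_i$: then $c(q_i,q_j)\le c(q_i,z)+c(z,q_j)\le r_i+3r_i=4r_i\le 4\max\{r_i,r_j\}$, which is exactly the condition for $j\in\mathcal{N}(i)$ in the definition of neighbouring robots.

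The key step is to bound $c(z,q_j)$ via a boundary point of $W_i$. Since there are at least two robots, take a shortest path from $z\in W_i$ to $q_j$; as $j\neq i$, this path must leave $W_i$, so let $b$ be the point at which it first exits, entering the cell $W_\ell$ of some robot $\ell\neq i$. Then $b$ lies in $\overline{W_i}\cap\overline{W_\ell}$, so $c(b,q_i)\le r_i$ and, since $q_\ell$ is a closest robot to $b$, $c(b,q_\ell)\le c(b,q_i)\le r_i$, while $c(z,b)\le c(z,q_i)+c(q_i,b)\le 2r_i$. Combining these with the triangle inequality gives
\[ c(z,q_\ell)\le c(z,b)+c(b,q_\ell)\le 2r_i+r_i=3r_i. \]
Finally, because $j$ is a closest robot to $z$ among all robots other than $i$ and $\ell\neq i$, we have $c(z,q_j)\le c(z,q_\ell)\le 3r_i$, which completes the reduction above and shows $j\in\mathcal{N}(i)$. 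As $c(z,q_j)\le c(z,q_k)$ for every $k\neq i$, and in particular for every $k\notin\mathcal{N}(i)$, the chosen neighbour $j$ satisfies the statement.

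I expect the main obstacle to be making the boundary point $b$ rigorous, especially in the discrete setting. In the continuous environment $b$ is a genuine point on the bisector of $q_i$ and $q_\ell$, and membership of $z$ and $b$ in $\overline{W_i}$ delivers the radius bounds directly. On a graph the transition from $W_i$ to $W_\ell$ occurs along an edge, so I would exploit the dummy-vertex construction of Section~\ref{sec:high-level-idea} (every edge is subdivided at its midpoint) to realise the crossing at an actual vertex $b$ satisfying the required inequalities, exactly as Lemma~\ref{lem:connection-lemma} is argued to hold for both the continuous and discrete neighbour definitions; once $b$ is fixed, the remaining triangle-inequality manipulations are routine.
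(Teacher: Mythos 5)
Your continuous-environment argument is essentially sound and rests on the same geometric idea as the paper's proof: follow a shortest path out of robot $i$'s cell, look at the first crossing point, and apply triangle inequalities. You package it differently (and arguably more cleanly): you prove \emph{directly} that the closest robot $j\neq i$ satisfies $c(q_i,q_j)\le 4r_i$, an absolute bound that transfers to the argmin, whereas the paper fixes a hypothetical non-neighbour $k$ beating all neighbours and derives a contradiction from the \emph{relative} bound $c(z,q_j)\le c(z,q_k)$. One detail to repair even there: Definition~1 is stated with the geodesic distance $d$ and the continuous cells $V_i$, so your chain must be run with $d$ and $r_i=\max_{p\in V_i}d(p,q_i)$ rather than with $c$ and $W_i$; with that substitution the continuous case goes through.

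The genuine gap is the graph case (Definition~2), which is half of the paper's proof and the case the algorithm's analysis actually needs, and your proposed fix does not close it. On a graph, the first vertex $b$ outside $W_i$ along the path gives you only $c(b,q_i)\le c(u,b)+c(u,q_i)\le h+r_i$, where $u\in W_i$ is the preceding vertex and $h$ is the length of the crossing (half-)edge -- a quantity \emph{not} controlled by $r_i$. What the dummy-vertex construction of Section~\ref{sec:high-level-idea} actually buys is different from what you claim: the dummy endpoint of the crossing edge lies in some robot's cell and every path to it ends with a half-edge of length $h$, so it forces \emph{that robot's} radius to be at least $h$, i.e.\ $h\le\max\{r_i,r_\ell\}$ where $\ell$ owns the cell the path enters. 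Feeding this into your chain yields only $c(z,q_\ell)\le 3r_i+2h$, hence $c(q_i,q_j)\le 4r_i+2h$ for your argmin $j$; since $h$ may be bounded only by $r_\ell$ (a third robot's radius), this does not certify $c(q_i,q_j)\le 4\max\{r_i,r_j\}$, and you cannot conclude $j\in\mathcal{N}(i)$. Your appeal to Lemma~\ref{lem:connection-lemma} is also misplaced: its proof uses triangle inequalities between robot positions only and never needs a crossing point, which is precisely the object that misbehaves here. The paper's graph proof sidesteps absolute radius bounds entirely: it first shows that cells sharing an edge belong to neighbouring robots (this is where the dummy vertices are used), and then, for the exit vertex $b\in W_\ell$ lying \emph{on} the shortest path from $z$ to the competing robot $q_k$, uses $c(b,q_\ell)\le c(b,q_k)$ to obtain $c(z,q_\ell)\le c(z,b)+c(b,q_k)=c(z,q_k)$, with no radius terms at all. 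Your argument becomes correct on graphs only after replacing your absolute bound $c(z,q_\ell)\le 3r_i$ with this relative comparison -- at which point it coincides with the paper's proof.
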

\begin{proof}

First we prove the result using definition of the neighbour robots in Definition 1. Suppose there exists $k \notin \mathcal{N}(i)$ and vertex $z \in W_i$ such that $c(z, q_k) < \min_{j \in \mathcal{N}(i)}c(z, q_j)$. Let $P$ be the shortest path from $z$ to $q_k$. Let $p$ be the point on the path where $P$ intersects the boundary of Voronoi cell of robot $i$. The point $p$ is not on the boundary of robot $k$, otherwise the distance between $d(q_i, q_k) \leq 4 \max\{\max_{p\in V_i} d(q_i, p), \max_{p\in V_k} d(q_k, p)\}$ which is a contradiction. Hence, the point $p$ is on the boundary of a neighbouring robot $j$. Therefore, there is a path from $q_j$ to $z$ shorter than $P$, then $c(q_j, z) = f(d(q_j, z)) \leq  f(d(q_k, z)) = c(q_k, z)$.

Now we prove the same result for the case where the underlying continuous coverage problem is unknown and neighbours are defined according to Definition 2. Suppose there exists $k \notin \mathcal{N}(i)$ and vertex  $z \in W_i$  such that $c(z, q_k) < \min_{j \in \mathcal{N}(i)}c(z, q_j)$. Observe that if two vertices of partitions $W_i$ and $W_k$ share an edge in $E$, then by the definition of neighbouring robots in Section A, robots $i$ and $k$ are neighbours. Therefore, since $k$ and $i$ are not neighbours, then there is no shared edge  between the vertices in $W_i$ and $W_k$. Let $P$ be the path on $G$ from $q_k$ to $z$. Then the path $P$ should contain a vertex $u$ in a partition of another robot $j$ which is a neighbour of robot $i$. Then by the metric property of the graph, $c(q_j, z) \leq c(q_j, u) + c(u, z) \leq c(q_k, u) + c(u, z) = c(q_k, z)$, where the second inequality is due to $u \in W_j$ and $c(u, q_j) \leq c(u, q_k)$. This is a contradiction. \qed

\end{proof}

Then we provide the following result on the change in the global objective with a successful move in the distributed algorithm.
\begin{lemma}
	\label{lem:same-cost-move}
	If a local move is accepted by the robot, then the global objective improves by at least $\epsilon_0$.
\end{lemma}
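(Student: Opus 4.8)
The plan is to show that every accepted move, regardless of its type, realizes the same net configuration change as a single $k$-median swap, and that the quantity the algorithm compares against $-\epsilon_0$ is an upper bound on the true change in $\mathcal{D}$ induced by that swap. Since acceptance requires this quantity to be at most $-\epsilon_0$, the global cost must drop by at least $\epsilon_0$.

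First I would dispatch the type-1 move, where robot $i$ moves to $v \in W_i$ with $\delta_v \leq -\epsilon_0$. Because only robot $i$ relocates, Lemma~\ref{lem:connection-lemma} guarantees that only the partitions of $\mathcal{N}(i)$ can change, so $\delta_v$, computed from exactly these partitions, is the \emph{exact} change in $\mathcal{D}$, and acceptance gives an improvement of at least $\epsilon_0$. The substantive case is the type-2 swap, where robot $i$ moves to $v$ and a cascade of robots shifts along the message chain $i = k_0, k_1, \dots, k_L$, each $k_t$ occupying the vertex $q_{k_{t-1}}$ just vacated by its predecessor, until the terminal robot $k_L$ vacates $q_{k_L}$. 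I would first observe that every intermediate position $q_{k_0},\dots,q_{k_{L-1}}$ is immediately refilled at the identical vertex, so those partitions are unchanged and contribute zero; the net effect on the configuration is therefore exactly ``remove $q_{k_L}$, add $v$.''

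I would then decompose the change in $\mathcal{D}$ of this swap into a gain term (vertices that switch to $v$) and a loss term (vertices of $W_{k_L}$ reassigned once $q_{k_L}$ disappears). By Lemma~\ref{lem:connection-lemma}, only vertices in $W_i$ and in the partitions of $\mathcal{N}(i)$ can be closer to $v$ than to their current server, so the gain splits into the part over $\mathcal{N}(i)$, which is precisely $\rho_v$ of~\eqref{eq:rho_v}, plus a nonpositive uncounted contribution from $W_i$. For the loss term, Lemma~\ref{lem:connection-lemma-2} confines the reassignment of $W_{k_L}$ to $\mathcal{N}(k_L)$, whose positions are preserved by the cascade, so this is exactly $\ell_v$ of~\eqref{eq:ell_v} (when $k_L \in \mathcal{N}(i)$, counter one) or $\ell$ of~\eqref{eq:ell} (when $k_L \notin \mathcal{N}(i)$, counter greater than one); in the latter regime Lemma~\ref{lem:connection-lemma} also forces $R_{k_L}(v) = \emptyset$, reconciling the two formulas. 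Consequently the true change equals $\rho_v + \ell_v$ (or $\rho_v + \ell$) plus the nonpositive $W_i$ term, hence is at most the accepted quantity $\leq -\epsilon_0$.

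The step I expect to be most delicate is verifying that the cascade leaves the neighbor positions of the terminal robot intact, so that $\ell$, computed from the original configuration, still equals the post-cascade reassignment cost, together with the bookkeeping that no vertex is double counted between the $\rho_v$ gain and the $\ell_v$/$\ell$ loss. This is exactly where the restriction to $W_{k_L}\setminus R_{k_L}(v)$ in~\eqref{eq:ell_v} is essential, and where I would lean on Lemma~\ref{lem:connection-lemma} to rule out any vertex of a non-neighbour partition (and, for long chains, any vertex of $W_{k_L}$) migrating to $v$.
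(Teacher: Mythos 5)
Your proof is correct and follows essentially the same route as the paper's: a case split between type-1 moves (whose change is exact by Lemma~\ref{lem:connection-lemma}) and type-2 cascades, the observation that every intermediate cascade position is immediately refilled so the net effect is the single swap removing $q_{k_L}$ and adding $v$, and the decomposition of that swap's cost into the gain $\rho_v$ (confined to $\mathcal{N}(i)$ by Lemma~\ref{lem:connection-lemma}) plus the loss $\ell_v$ or $\ell$ (confined to $\mathcal{N}(k_L)$ by Lemma~\ref{lem:connection-lemma-2}). Your refinements---absorbing the $W_i$ contribution as a nonpositive extra term so the accepted quantity is an upper bound on the true change, and noting that Lemma~\ref{lem:connection-lemma} forces $R_{k_L}(v)=\emptyset$ in the multi-hop regime---are sound and, if anything, slightly more careful than the paper's claimed exact equalities, but the underlying argument is the same.
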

\begin{proof}
	A local move falls under the following cases:
	\begin{enumerate}
		\item Since, by Lemma~\ref{lem:connection-lemma}, a move of type 1 can only change the sensing cost of the neighbouring robots. Then the result is trivial for the local moves of type 1.
		      		      		      		      
		\item If the local move consists of a move by the robot $i$ that is executing \textsc{LocalMove} to a vertex $v$ in its partition and a neighbouring robot $j$ moving to vertex $q_i$. Let $Q'$ be the configuration after the local move, then the change in the global objective $\Delta \mathcal{D} = \mathcal{D}(Q') - \mathcal{D}(Q)$ is given by the following:
		      \begin{align}
		      	\Delta \mathcal{D} = \sum_{k \in [m]}\sum_{u \in  W_{k}} \min_{q \in Q'} w(u)c(u, q) - \sum_{k \in [m]}\sum_{u \in  W_{k}}  w(u)c(u, q_k) \label{eq:change-in-global-cost}. 
		      \end{align}
		      They by Lemma~\ref{lem:connection-lemma} and~\ref{lem:connection-lemma-2}, the sensing cost changes only for vertices in $u \in \cup_{k \in \mathcal{N}(i)} W_k$, therefore,
		      \begin{align*}
		      	\Delta \mathcal{D} & =  \sum_{k \in \mathcal{N}(i)}\sum_{u \in  W_{k}} w(u)[\min_{q \in Q'} c(u, q) - c(u, q_k)] \\
		      	                   & = \sum_{k \in \mathcal{N}(i)}\sum_{u \in  R_{k}(v)} w(u)[c(u, v) - c(u, q_k)]               \\&+ \sum_{k \in \mathcal{N}(i)}\sum_{u\in W_k\setminus R_k(v)}w(u)[\min_{q\in Q'}c(u,q)-c(u, q_k)].
		      \end{align*}
		      Observe that the sensing cost for vertex $u \in W_k\setminus R_k(v)$ for robot $k$ at $q_k \in Q' = \mathcal{N}(i)\cup\{v\}\setminus\{j\}$ does not change, therefore, we have
		      \begin{align*}
		      	\Delta \mathcal{D} = \sum_{k \in \mathcal{N}(i)} & \sum_{u \in  R_{k}(v)} w(u)[c(u, v) - c(u, q_k)] \\&+ \sum_{u\in W_j\setminus R_j(v)}w(u)[\min_{q\in Q'}c(u,q)-c(u, q_j)].
		      \end{align*}
		      		      		      		      
		      Hence, the result follows immediately as $\Delta \mathcal{D} = \rho_v + l_v$.
		      		      		      		      
		\item  Suppose the local move consists of a move by the robot $i$ that is executing \textsc{LocalMove} to a vertex $v$ in its partition and a sequence of moves between the neighbouring robots. Without loss of generality, let $\langle v, q_{i}, q_{i+1}, \ldots, q_{j - 1}, q_{j}\rangle$ be the sequence of moves between the neighbouring robots where each robot moves to the previous vertex of the preceding robot in the sequence. Let $Q'$ be the configuration after the local move, then the change in the global objective is given by Equation~\eqref{eq:change-in-global-cost}. Observe that each robot accepts only the message from the parent robot. Therefore, among the neighbours of the robots in the sequence only the parent of each robot moves.
		
		 First we show that the change in the sensing cost under this sequence of moves only occurs for vertices in $\cup_{k \in \mathcal{N}(i)} W_k$ and vertices in $W_j$. Let $u$ be a vertex assigned to robot $k' \in [m] \setminus \{j \cup \mathcal{N}(i) \}$ in configuration $Q$, i.e., $ u \in W_{k'}$. Since $k' \notin \mathcal{N}(i)$, then by Lemma~\ref{lem:connection-lemma} a move to vertex $v$ will not improve the sensing cost of $u$. Also if $k'$ is among the robots moving in the sequence, then there is a robot moving to its previous location, therefore, each vertex in $W_{k'}$ will be sensed by another robot with the same sensing cost. Therefore, the total change $\Delta \mathcal{D}$ in the sensing cost of the vertices becomes
		      \begin{align*}
		      	\sum_{k \in \mathcal{N}(i)} & \sum_{u \in  R_{k}(v)} w(u)[c(u, v) - c(u, q_k)] + \sum_{u\in W_j}w(u)[\min_{q\in Q'}c(u,q)-c(u, q_j)]. 
		      \end{align*}
		      		      		      		      
		      Hence, the result follows immediately as $\Delta \mathcal{D} = \rho_v + l$.\qed
	\end{enumerate}
\end{proof}

Now we show the following result on the valid local moves in \textsc{CentralizedAlg} given the final configuration of the proposed distributed algorithm.
\begin{lemma}
	\label{lem:no-move-remains}
	If the proposed distributed algorithm terminates, then there is no  single swap move in the centralized local search algorithm $\textsc{CentralizedAlg}$ (See Section~\ref{subsec:discrete_env_problem_def}) that improves the objective function.
\end{lemma}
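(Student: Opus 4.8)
The plan is to argue by contradiction: assume the distributed algorithm has terminated, yet there remains a single swap in \textsc{CentralizedAlg} that improves $\mathcal{D}$ by more than $\epsilon_0$. Such a swap removes the position $q_s$ of some robot $s$ and inserts a new vertex $v$, producing $Q' = (Q\setminus\{q_s\})\cup\{v\}$ with $\mathcal{D}(Q') < \mathcal{D}(Q) - \epsilon_0$. Let $j$ be the robot whose partition contains the inserted vertex, i.e.\ $v \in W_j$. The goal is to exhibit a move that robot $j$ could have made during its call to \textsc{LocalMove}, contradicting the assumption that every robot is inactive at termination.

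First I would realize this swap as a sequence of neighbour-to-neighbour moves, exactly as described in Section~\ref{sec:high-level-idea}. Robot $j$ moves to $v$, which vacates $q_j$; the vacancy is then propagated along a chain of robots $j = j_0, j_1, \ldots, j_L = s$, where each $j_t$ moves into the position $q_{j_{t-1}}$ just vacated by its predecessor, and the chain ends by removing robot $s$. Because every robot other than $j$ and $s$ only relocates to a position already in $Q$, the set of occupied positions after the cascade is exactly $Q'$; hence the cascade and the single swap induce the same change in the global objective $\mathcal{D}$.

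The essential step is to show that this chain can be taken entirely within the neighbour graph, so that the cascade is one the protocol can actually execute and evaluate. For this I would take a shortest path from $q_s$ to $v$ and track the partitions it traverses: by the argument used in Lemma~\ref{lem:connection-lemma-2} (together with Lemma~\ref{lem:connection-lemma}), whenever the path leaves a robot's cell it enters the cell of a neighbouring robot, and iterating yields a neighbour chain from $j$ to $s$. Along this chain the reassignment of the vertices freed by removing $s$ stays within neighbourhoods by Lemma~\ref{lem:connection-lemma-2}, so the quantities the robots compute in Equations~\eqref{eq:rho_v} and~\eqref{eq:ell} capture the true cost change; by Lemma~\ref{lem:same-cost-move} the value $\rho_v + \ell$ returned at the end of the chain equals $\mathcal{D}(Q') - \mathcal{D}(Q) < -\epsilon_0$. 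Consequently, when robot $j$ propagated $\Gamma_j$ for the candidate $v$, the acceptance test at robot $s$ would have succeeded and robot $j$ would have executed a valid local move, contradicting termination. Since $v$ and $s$ were arbitrary, no improving single swap can remain.

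I expect the main obstacle to be this third step: guaranteeing that the inserted vertex $v$ and the removed robot $s$ are connected through the neighbour graph, and that the message propagation in Algorithms~\ref{alg:local-move}--\ref{alg:receiver} actually reaches $s$ with the correct bookkeeping. The delicate point is matching the protocol's incremental $\rho_v+\ell$ computation to the global change $\mathcal{D}(Q')-\mathcal{D}(Q)$ of the swap; this is precisely where the enlarged communication range in the definition of neighbours, and Lemmas~\ref{lem:connection-lemma} and~\ref{lem:connection-lemma-2}, become indispensable, since they ensure that every vertex whose service cost changes lies in a partition reachable along the chain.
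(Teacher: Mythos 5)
Your proposal is correct and takes essentially the same route as the paper: argue by contradiction, realize the improving centralized swap as the originating robot's proposal $\Gamma_i$ propagating through the neighbour graph to the robot being removed, identify the computed quantity $\rho_v+\ell$ with the global change of the swap via Lemma~\ref{lem:same-cost-move}, and contradict that robot's rejection at termination. The only difference is that you explicitly construct the neighbour chain by tracking a shortest path across partitions (via the argument of Lemma~\ref{lem:connection-lemma-2}), a reachability detail the paper leaves implicit when it simply assumes the message arrives ``after $l$ communications.''
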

\begin{proof}
	Suppose that there exists a centralized local move of robot $j$  at vertex $q_j$ to a vertex $v \in W_i$ that improves the objective function by $\epsilon_0$. Therefore, adding a robot to $v$ improves the sensing cost of the vertices in $\cup_{k\in \mathcal{N}(i)}W_k$ by $\rho_v\leq -\epsilon_0$. Therefore, by construction of the algorithm, the robot $i$ would have suggested the move to its neighbouring robots. Suppose after $l$ communications, robot $j$ at $q_j$ receives the message
	for the first time. In Line~\ref{algln:line-l-v} (resp. Line~\ref{algln:line-l})  of Algorithm~\ref{alg:receiver} if $j \in \mathcal{N}(i)$ (resp. $j \notin \mathcal{N}(i)$), robot $j$ calculates the increase in the sensing cost $\ell_v$ (resp. $\ell$) for the vertices in $\cup_{k\in \mathcal{N}(j)}W_k$ by the move from $q_j$ to the parent of robot $j$. Since robot $j$ has rejected this offer, by Lemma~\ref{lem:same-cost-move} the change in the global sensing cost is less than $\epsilon_0$. This is a contradiction.\qed
\end{proof}

\begin{theorem}
	\label{thm:disc-approx}
	The proposed distributed coverage control algorithm provides a solution within $5 + o(\epsilon_0)$ factor of the optimal configuration.
\end{theorem}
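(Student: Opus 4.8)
The plan is to derive the theorem as a corollary of Lemma~\ref{lem:no-move-remains} together with the known approximation guarantee of the centralized single-swap local search, so that all of the distributed bookkeeping is confined to the lemmas and the final step is a clean reduction. Let $Q$ denote the configuration of the robots at termination of the distributed algorithm, and let $Q^*_G$ be the optimal $k$-median configuration on $G$ (as in the proof of Theorem~\ref{thm:proof-cont-approx}). The first step is to observe that Lemma~\ref{lem:no-move-remains} already supplies the one structural fact I need: when the distributed algorithm halts, there is no single swap of a robot position in $Q$ with a vertex of $\mathcal{S}\setminus Q$ that improves $\mathcal{D}$ by $\epsilon_0$ or more. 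Equivalently, $Q$ is precisely a local optimum of $\textsc{CentralizedAlg}$ with swap size $p=1$. This is the only place where the message-passing mechanics enter; everything downstream treats $Q$ as a black-box single-swap local optimum.

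The second step is to invoke the centralized analysis of the $k$-median local search recalled in Section~\ref{subsec:discrete_env_problem_def} (following~\cite{arya2004local}): any configuration admitting no improving $p$-swap has cost at most $(3 + 2/p + o(\epsilon_0))\,\mathcal{D}(Q^*_G)$. Specializing to $p=1$ gives $\mathcal{D}(Q)\leq (5 + o(\epsilon_0))\,\mathcal{D}(Q^*_G)$, which is exactly the claimed bound. The $o(\epsilon_0)$ term is the standard slack arising because local optimality is enforced only up to the per-move tolerance $\epsilon_0$; I would simply track how this tolerance propagates through the test-swap inequalities of the centralized analysis, where it contributes a vanishing additive error as $\epsilon_0 \to 0$.

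Since Lemma~\ref{lem:no-move-remains} is available, the theorem itself is essentially immediate, so the genuine obstacle lies in being sure the distributed termination condition is \emph{exactly} single-swap local optimality rather than something weaker. The subtlety is that a centralized swap placing robot $j$ onto a vertex $v\in W_i$ is not directly executable when $j\notin\mathcal{N}(i)$; the algorithm instead realizes it as a propagating sequence of neighbour-to-neighbour moves, whose feasibility relies on Lemmas~\ref{lem:connection-lemma} and~\ref{lem:connection-lemma-2} (a robot's departing vertices are reassigned within $\mathcal{N}(i)$) and whose net objective change is certified by Lemma~\ref{lem:same-cost-move} to equal $\rho_v+\ell_v$ or $\rho_v+\ell$. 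I would therefore verify that rejection of the message by $j$ means $\rho_v+\ell_v \ge -\epsilon_0$ (resp. $\rho_v+\ell \ge -\epsilon_0$), i.e. that the corresponding centralized swap is non-improving, so that the full hypothesis required by the $p=1$ bound of~\cite{arya2004local} holds. Once this correspondence is pinned down, the centralized estimate applies verbatim and the theorem follows.
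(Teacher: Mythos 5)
Your proposal is correct and follows essentially the same route as the paper: both treat the terminal configuration as a single-swap ($p=1$) local optimum via Lemma~\ref{lem:no-move-remains} and then invoke the $3+2/p+o(\epsilon_0)$ guarantee of \textsc{CentralizedAlg} from~\cite{arya2004local} to obtain the $5+o(\epsilon_0)$ factor. Your additional verification that the distributed rejection condition certifies $\rho_v+\ell_v \ge -\epsilon_0$ (resp.\ $\rho_v+\ell \ge -\epsilon_0$) is precisely the content the paper delegates to Lemmas~\ref{lem:same-cost-move} and~\ref{lem:no-move-remains}, so no new argument is introduced.
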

\begin{proof}
	The result follows immediately from Lemma~\ref{lem:no-move-remains}. The final configuration in the distributed algorithm is a locally optimal solution for the \textsc{CentralizedAlg} with single swap at each iteration, i.e. $p =1$, therefore, the configuration provides a coverage within  a factor $5 + o(\epsilon_0)$ of the global optimal.\qed
\end{proof}
\begin{corollary}
	Given an environment $\mathcal{X}$ with $m$ mobile robots and a sampling of $\mathcal{X}$ with dispersion $\zeta$, the solution $Q$ obtained from the proposed distributed coverage control algorithm provides coverage cost $\mathcal{H}(Q) \leq 5 \mathcal{H}(S^*) + o(f(\zeta) + \epsilon_0)$, where $S^*$ is the optimal solution of the continuous coverage problem.
\end{corollary}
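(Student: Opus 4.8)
The plan is to obtain the corollary purely by composing the two guarantees already established, without reopening the algorithmic analysis. The key observation is that Theorem~\ref{thm:disc-approx} certifies the proposed distributed algorithm as an $\alpha$-approximation algorithm for the \emph{discrete} coverage problem with $\alpha = 5 + o(\epsilon_0)$, while Theorem~\ref{thm:proof-cont-approx} is precisely a black-box reduction converting any such discrete $\alpha$-approximation into a continuous guarantee. So the entire proof amounts to instantiating that reduction at this particular value of $\alpha$.

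Concretely, I would first invoke Theorem~\ref{thm:disc-approx} to certify that the configuration $Q$ returned by Algorithm~\ref{alg:sender} on the sampled graph $G$ satisfies $\mathcal{D}(Q) \le (5 + o(\epsilon_0))\,\mathcal{D}(Q^*_G)$, where $Q^*_G$ is the discrete optimum. I would then feed $\alpha = 5 + o(\epsilon_0)$ into the inequality proved inside Theorem~\ref{thm:proof-cont-approx}, namely $\mathcal{H}(Q) \le \alpha\,\mathcal{H}(S^*) + (4\alpha + 1)\,f(\zeta)$. Substituting gives
\[
\mathcal{H}(Q) \le (5 + o(\epsilon_0))\,\mathcal{H}(S^*) + \bigl(21 + o(\epsilon_0)\bigr) f(\zeta).
\]

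The remaining work is to collect these terms into the claimed form $5\,\mathcal{H}(S^*) + o(f(\zeta) + \epsilon_0)$. Here I would use that $\mathcal{X}$ is compact and $\int_{\mathcal{X}}\phi(p)\,dp = 1$, so that $\mathcal{H}(S^*)$ is a finite constant independent of $\zeta$ and $\epsilon_0$; hence the cross term $o(\epsilon_0)\,\mathcal{H}(S^*)$ is itself $o(\epsilon_0)$, and the coefficient of $f(\zeta)$ is $O(f(\zeta))$. Since $f$ is non-decreasing with $f(0) = 0$, both $f(\zeta) \to 0$ as the dispersion $\zeta \to 0$ and $o(\epsilon_0) \to 0$ as $\epsilon_0 \to 0$, so the entire additive residual is subsumed by the paper's $o(f(\zeta) + \epsilon_0)$ shorthand, leaving the dominant $5\,\mathcal{H}(S^*)$ term isolated.

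I do not expect a genuine obstacle, since both heavy lifts---the $5$-factor bound on the discrete local optimum and the sampling-error reduction---are already proved. The only point demanding care is the slightly informal $o(\cdot)$ bookkeeping: the additive error inherited from Theorem~\ref{thm:proof-cont-approx} is strictly $O(f(\zeta))$ rather than literally $o(f(\zeta))$, so I would make explicit (consistent with the paper's usage elsewhere) that the notation is meant to capture additive terms that vanish as $\zeta, \epsilon_0 \to 0$, and that boundedness of $\mathcal{H}(S^*)$ is exactly what prevents the $o(\epsilon_0)$ multiplicative slack from inflating the leading constant beyond $5$.
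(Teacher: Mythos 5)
Your proposal is correct and is essentially the paper's own proof: the paper simply states that the corollary ``follows immediately from Theorems~\ref{thm:proof-cont-approx} and~\ref{thm:disc-approx},'' which is exactly the composition you carry out, instantiating $\alpha = 5 + o(\epsilon_0)$ in the reduction of Theorem~\ref{thm:proof-cont-approx}. Your extra care with the $o(\cdot)$ bookkeeping (the $(4\alpha+1)f(\zeta)$ additive term and the boundedness of $\mathcal{H}(S^*)$) only makes explicit what the paper leaves implicit.
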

\begin{proof}
	Proof follows immediately from Theorems~\ref{thm:proof-cont-approx} and~\ref{thm:disc-approx}.\qed
\end{proof}

\begin{remark}[Asynchronous Execution of Local Moves]
In the asynchronous implementation of the algorithm, instead of the robots performing local moves in turn, the robots calculate the change in local objectives and send and receive messages in parallel. If a robot receives messages originating from multiple active robots, it selects the message from one of them (for instance, from the robot with lowest UID) and runs Algorithm~\ref{alg:receiver} for that message. When a robot decides to move to a new location, it moves only if none of its neighbors are currently moving. If any of its neighbors is currently moving, it waits until all of its neighbors stop moving and runs Algorithm~\ref{alg:local-move} and Algorithm~\ref{alg:receiver} again.  

As shown in the Lemmas~\ref{lem:connection-lemma} and~\ref{lem:connection-lemma-2}, if the robot $i$ moves in its partitions or to a vertex $v$ in the partition of robot $j$, only the cost of the vertices in $\mathcal{N}(i)$ and $\mathcal{N}(j)$ is affected. Therefore, if a robot only moves when none of its neighbors are moving, the change in local cost calculated by that robot is correct. Hence the distributed algorithm presented in the paper can be implemented in an asynchronous fashion.
\end{remark}

\section{Time Complexity}
\label{sec:time-complexity}
In this section, we characterize the runtime and the communication complexity of the proposed algorithm.

Let $Q_0$ be the starting configuration of the robots and $Q^*$ be the optimal configuration for problem of coverage on graph $G$, then we have the following result on the runtime of the proposed algorithm. For $\epsilon_0 = 0$, we follow the the analysis similar to~\cite{yun2014distributed}. Since there are a finite number of possible local moves that improve the global sensing cost, each iteration improves the global sensing cost by at least $\epsilon'>0$. Therefore the algorithm terminates in $\frac{\mathcal{D}(Q_0) - \mathcal{D}(Q^*)}{\epsilon'}$ iterations. Observe that with $\epsilon_0 = 0$, the distributed algorithm provides a solution within $5$ factor of the optimal solution with possibly non-polynomial number of iterations. However, we can prove convergence in polynomial time if the sampling of the environment $\mathcal{S}$ satisfies the following properties. 
\begin{assumption}
The weight of a vertex $v$ in $\mathcal{S}$ is at least $w_0$ for some $w_0 > 0$, i.e., $\int_{p \in \sigma(v)} \phi(p)dp \geq w_0 >0$.
\end{assumption}

This follows the common assumption in the coverage control literature where there is a basis function defined for $\phi$~\cite{cortes2004coverage}. For a given $w_0$, we remove a vertex $v$ with with weight $w(v) < w_0$ from the samples and recalculate the weights on the vertices. 

\begin{assumption}
The ratio $\max_{u, v \in \mathcal{S}} f(d(u, v))/\min_{u, v \in \mathcal{S}} f(d(u, v))$ is polynomial in the number of the samples $|\mathcal{S}|$. 
\end{assumption}

For instance, if a graph is constructed via a grid sampling of the continuous environment, where each cell is a $d\times d$ square, then we have
\[
    \frac{\max_{e \in E} c(e)}{\min_{e \in E} c(e)}  = \frac{f(\sqrt{2|\mathcal{S}|}d)}{f(d)} =  O(\sqrt{|\mathcal{S}|}),
\]
where the second equality is by the sub-additivity of sensing function $f$.

For a given $\epsilon >0$ and a polynomial $p(|\mathcal{S}|, m)$, we have,
\begin{lemma}
The proposed algorithm with $\epsilon_0 = \frac{\epsilon w_0 }{p(|\mathcal{S}|, m)}\min_{e \in E}c(e)$ terminates in the polynomial number of iterations, i.e.,
\[\log(\mathcal{D}(Q_0)/\mathcal{D}(Q^*))/\log(1 - \frac{\epsilon w_0 }{p(|\mathcal{S}|, m)} \frac{\min_{e \in E}c(e)}{\max_{e \in E}c(e)}).\]
\end{lemma}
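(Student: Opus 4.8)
The plan is to turn the fixed additive per-iteration improvement guaranteed by Lemma~\ref{lem:same-cost-move} into a multiplicative contraction of the global objective $\mathcal{D}$, and then read the iteration count off a geometric decay. First I would record the two facts that drive everything: (i) by Lemma~\ref{lem:same-cost-move}, every accepted local move decreases $\mathcal{D}$ by at least $\epsilon_0$, so iterations and improving moves coincide; and (ii) at \emph{every} configuration $Q$ the cost is bounded by the diameter, since $\mathcal{D}(Q) = \sum_{v} w(v)\min_{q\in Q} c(v,q) \le \max_{e\in E} c(e)$ using $\sum_v w(v)=1$ and the fact that, in the metric completion, $\max_{u,v} c(u,v) = \max_{e\in E} c(e)$.

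Next I would chain these. Writing $Q_t$ for the configuration after $t$ iterations and setting $\gamma = \frac{\epsilon w_0}{p(|\mathcal{S}|,m)}\frac{\min_{e}c(e)}{\max_{e}c(e)}$, the additive improvement becomes multiplicative:
\[
\mathcal{D}(Q_{t+1}) \le \mathcal{D}(Q_t) - \epsilon_0 = \mathcal{D}(Q_t)\Bigl(1 - \frac{\epsilon_0}{\mathcal{D}(Q_t)}\Bigr) \le \mathcal{D}(Q_t)(1-\gamma),
\]
where the last step substitutes the chosen $\epsilon_0 = \frac{\epsilon w_0}{p(|\mathcal{S}|,m)}\min_{e}c(e)$ and uses $\mathcal{D}(Q_t) \le \max_{e}c(e)$. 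Unrolling gives $\mathcal{D}(Q_t) \le \mathcal{D}(Q_0)(1-\gamma)^t$. Because the algorithm never drives the cost below the optimum, $\mathcal{D}(Q_t) \ge \mathcal{D}(Q^*)$ throughout; hence while the algorithm is still running we must have $\mathcal{D}(Q^*) \le \mathcal{D}(Q_0)(1-\gamma)^t$. Solving for $t$ yields the stated bound $t \le \log\bigl(\mathcal{D}(Q_0)/\mathcal{D}(Q^*)\bigr)/\bigl(-\log(1-\gamma)\bigr)$; note the denominator $\log(1-\gamma)$ in the statement is negative, so the iteration count is its magnitude.

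Finally I would verify polynomiality, which is exactly where the two preceding assumptions enter. Using $-\log(1-\gamma) \ge \gamma$, the bound is at most $\gamma^{-1}\log\bigl(\mathcal{D}(Q_0)/\mathcal{D}(Q^*)\bigr)$. The factor $\gamma^{-1} = \frac{p(|\mathcal{S}|,m)}{\epsilon w_0}\frac{\max_{e}c(e)}{\min_{e}c(e)}$ is polynomial because $p$ is polynomial by hypothesis and $\max_{e}c(e)/\min_{e}c(e)$ is polynomial in $|\mathcal{S}|$ by the ratio assumption. For the logarithmic factor I would bound $\mathcal{D}(Q_0) \le \max_{e}c(e)$ and, in the nontrivial regime $m<|\mathcal{S}|$, $\mathcal{D}(Q^*) \ge w_0\min_{e}c(e)$ (some vertex is not colocated with a robot, contributing weight at least $w_0$ at distance at least $\min_e c(e)$, using the weight-lower-bound assumption); thus $\log\bigl(\mathcal{D}(Q_0)/\mathcal{D}(Q^*)\bigr) \le \log\bigl(\frac{\max_{e}c(e)}{w_0\min_{e}c(e)}\bigr)$, which is poly-logarithmic. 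The product is therefore polynomial in $|\mathcal{S}|$ and $m$.

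The main obstacle I anticipate is step two: a fixed additive gain $\epsilon_0$ can only produce a geometric rate if $\epsilon_0/\mathcal{D}(Q_t)$ is uniformly lower-bounded, and this rests entirely on the a priori ceiling $\mathcal{D}(Q_t)\le\max_e c(e)$ and on identifying $\max_e c(e)$ with the metric diameter after completion so that the ratio appearing in $\gamma$ is precisely the one in the statement. Getting that identification clean is the crux; the remaining work is routine bookkeeping with the two assumptions, together with the minor care of reading the stated denominator $\log(1-\gamma)$ as its absolute value.
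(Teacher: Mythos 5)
Your proof is correct and follows essentially the same route as the paper's: the per-iteration additive improvement of $\epsilon_0$ (from Lemma~\ref{lem:same-cost-move}) is converted into a multiplicative contraction via the ceiling $\mathcal{D}(Q_i) \leq \max_{e\in E} c(e)$, and the iteration count is then read off the geometric decay. The extra details you supply---justifying the ceiling via $\sum_v w(v)=1$, noting that $\log(1-\gamma)$ must be read in absolute value, and explicitly verifying polynomiality using $\mathcal{D}(Q^*) \geq w_0 \min_{e\in E} c(e)$---are points the paper leaves implicit, and they strengthen rather than change the argument.
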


\begin{proof}
Let $Q_i$ be the configuration of the robots at step $i$ of the algorithm. As the global sensing cost improves by at least $\epsilon_0$ after each iteration, we have,
\[
\mathcal{D}(Q_{i + 1}) - \mathcal{D}(Q_{i}) \leq - \epsilon_0 = - \frac{\epsilon w_0 }{p(|\mathcal{S}|, m)}\min_{e \in E}c(e).
\]
Observe that $\mathcal{D}(Q_{i}) \leq  \max_{e \in E}c(e)$. Therefore, 
\[
\mathcal{D}(Q_{i + 1}) \leq \mathcal{D}(Q_{i})- \frac{\epsilon w_0 }{p(|\mathcal{S}|, m)}\min_{e \in E}c(e) \leq \mathcal{D}(Q_{i})(1 - \frac{\epsilon w_0}{p(|\mathcal{S}|, m)} \frac{\min_{e \in E}c(e)}{\max_{e \in E}c(e)}).
\]
Therefore, at each iteration of the distributed coverage algorithm the sensing cost improves by the factor of $1 - \frac{\epsilon w_0}{p(|\mathcal{S}|, m)} \frac{\min_{e \in E}c(e)}{\max_{e \in E}c(e)}$. Hence, the algorithm terminates in $\log(\mathcal{D}(Q_0)/\mathcal{D}(Q^*))/\log(1 - \frac{\epsilon w_0 }{p(|\mathcal{S}|, m)} \frac{\min_{e \in E}c(e)}{\max_{e \in E}c(e)})$ iterations which is polynomial in the input size, $1/\epsilon$ and $1/w_0$.\qed
\end{proof}

\begin{remark}
At each iteration of the algorithm, at most $m^2$ messages are sent with size at most $n\log(\max_{e\in E}c(e)) +  \log(m)$ bits by the robot executing \textsc{LocalMove}. Then at most $m^2$ messages are sent back between the robots in the acceptance/rejection step. Finally, the \textsc{SendAcknowledgment} step requires at most $m$ messages. Hence, the communication complexity of the proposed distributed algorithm is $O(\log(\mathcal{D}(Q_0)/\mathcal{D}(Q^*))/\log(1 - \frac{\epsilon w_0 }{p(|\mathcal{S}|, m)} \frac{\min_{e \in E}c(e)}{\max_{e \in E}c(e)})m^2)$.
\end{remark}

\section{Simulation Results}
\label{sec:simulations}
In this section, we evaluate the performance of the proposed distributed algorithm and compare it to convex and non-convex distributed coverage algorithms and the  centralized algorithm in~\cite{ahmadian2013local}. To construct the discrete problem described in Section 3, we use a grid sampling of the environment. We denote the maximum distance inside a Voronoi cell of a robot $i$ by $R_{\mathrm{comm}} = \max_{p \in V_i} d(q_i, p)$ and we evaluate the performance of the proposed algorithm with communication ranges of $4R_{\mathrm{comm}}$ (See Definition 1) and $2R_{\mathrm{comm}}$ which is analogous to the conventional communication model in the continuous coverage literature. In the rest of this section, we use $f(x)=x$ as the sensing cost function.

\subsection{Convex Environments}
In this experiment, we compare our algorithm to distributed Lloyd's algorithm~\cite{cortes2004coverage} in convex environments. We use the Euclidean distance as the metric between two points. The comparison is conducted in a $1500\times 850$ environment with $100$ different event distributions. The event distributions are truncated multivariate normal distributions with mean $[1400, 800]$ and  covariance matrices $\Sigma = [\sigma, 0; 0,\sigma]$ where $\sigma$ is uniformly randomly selected from interval $ [5,10] \times 10^4$. In this experiment, the robots are initialized in the bottom left corner of the environment. Figure~\ref{fig:convex_diff_robots} illustrates the percentage difference of the solutions provided by the two algorithms with respect to the solution of the centralized algorithm. Observe that the proposed achieves solution quality very close to the centralized algorithm, even with a large number of robots, while Lloyd's algorithm provides solutions with approximately $15\%$ deviation.  

\begin{figure}
    \centering
    \includegraphics[width=.8\textwidth]{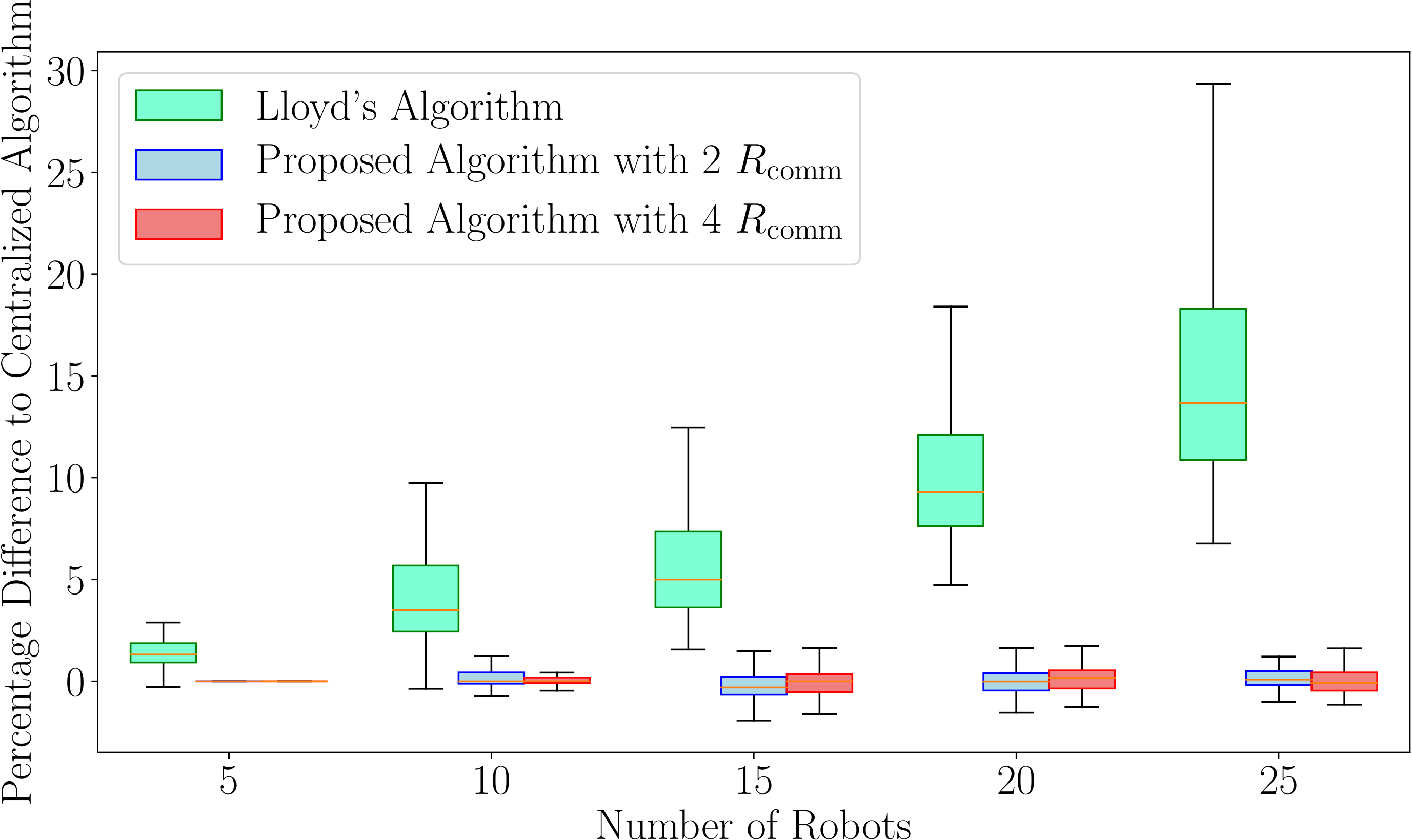}
    \caption{Percentage difference of the solutions of different algorithms to the solution of the centralized algorithm}
    \label{fig:convex_diff_robots}
\end{figure}

\begin{figure}[H]
    \centering
    \includegraphics[width=.75\textwidth]{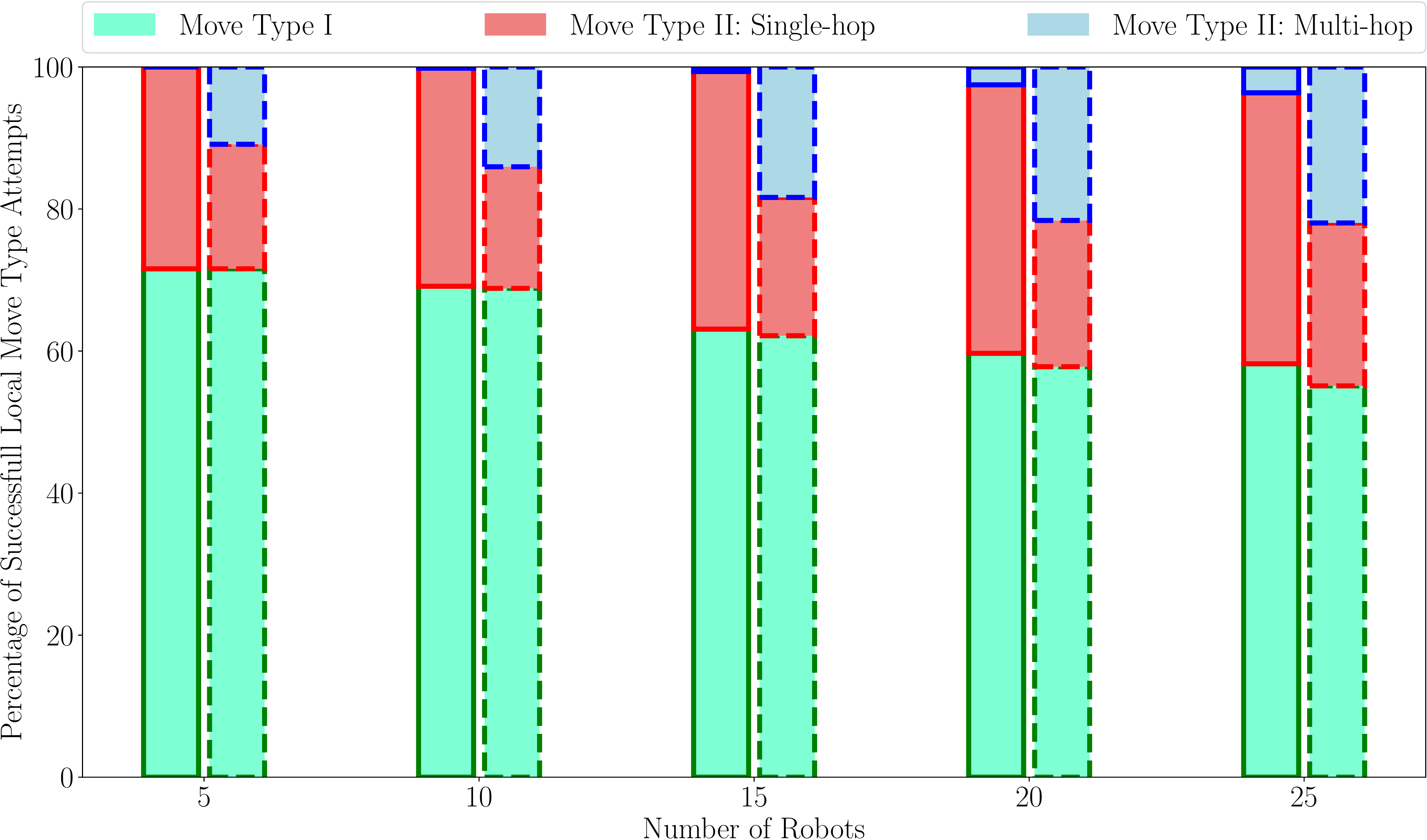}
    \caption{Percentage share of attempts for different local move types which resulted in improving the objective.}
    \label{fig:convex_diff_robots_msg}
\end{figure}

Figure~\ref{fig:convex_diff_robots_msg} shows the percentage share of the different local move types. The dashed lines show the results  with $2R_{\mathrm{comm}}$ communication range.  The single hop Move type II is a local move which involves a robot and its neighbours in comparison to the multi-hop local move which involves non-neighbour robots. Observe that the majority of the local moves are Move Type I which communicates with only the neighbouring robots. However, the local Moves of Type II help the proposed algorithm to leave locally optimal solutions.

\begin{figure}
    \centering
    \includegraphics[width=.8\textwidth]{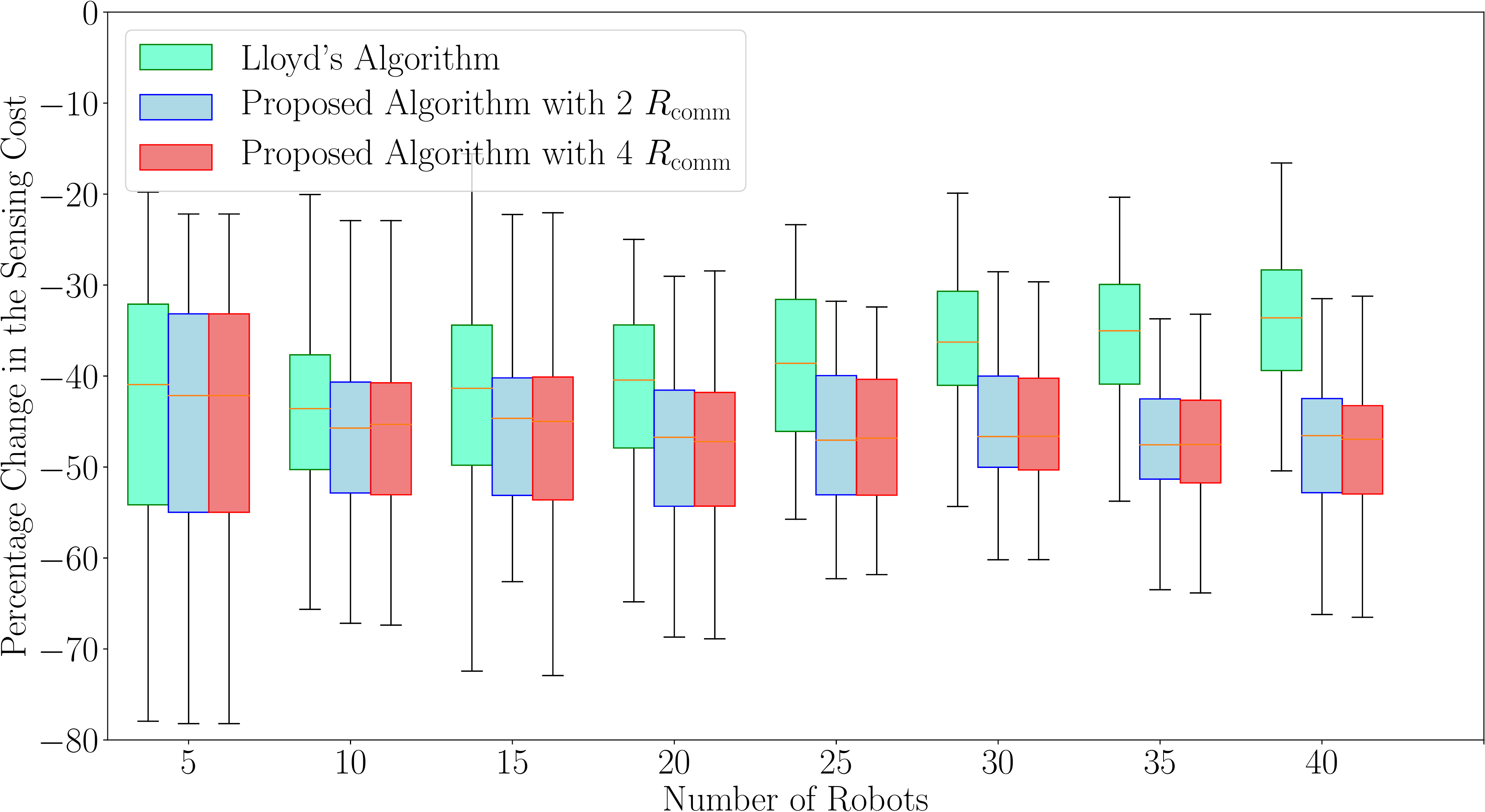}
    \caption{Percentage improvement of algorithms with random initial configurations.}
    \label{fig:convex_random_init_configs}
\end{figure}

Figure~\ref{fig:convex_random_init_configs} illustrates the improvement in the sensing cost for the Lloyd's and the proposed algorithm in a convex environment with uniformly random initial configurations. The results are obtained for $50$ initial configurations in the environment. Observe that even with the large number of robots where the random configuration provides a relatively good sensing cost, the proposed algorithm improves the solution by $50\%$ on average. In a system of $40$ robots, our proposed algorithm provided $\approx 15\%$ additional improvement on the sensing cost as compared to the Llyod's algorithm.

Figure~\ref{fig:final_config_convex} shows the final configuration and the paths of $10$ robots for the two algorithms in a test environment.

\begin{figure}[H]
    \begin{subfigure}[t]{0.49\linewidth}
    \begin{center}
    \includegraphics[width=\linewidth]{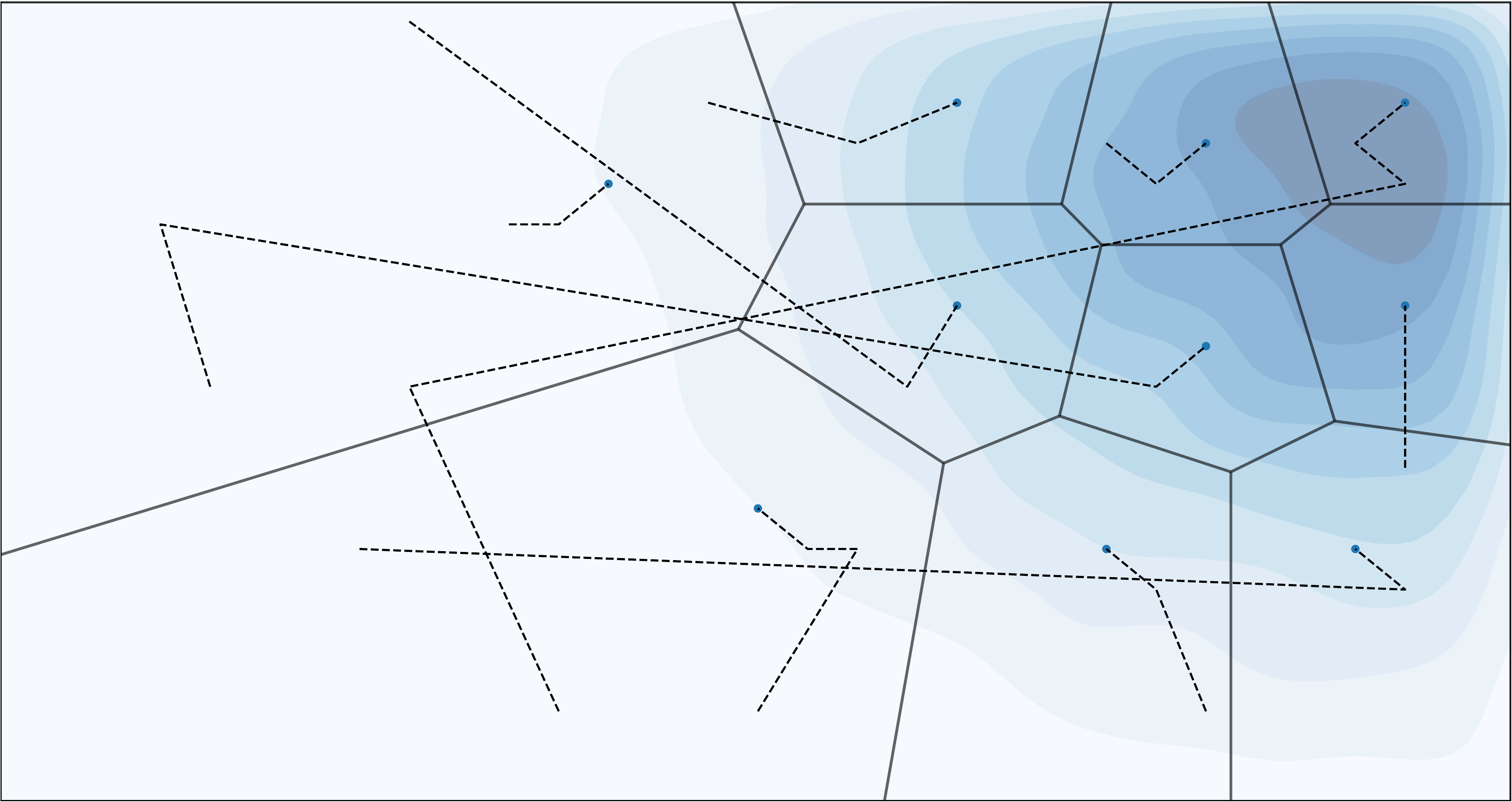}
    \caption{Proposed Algorithm with $4 R_{\mathrm{comm}}$ communication range.}
    \end{center}
    \label{fig:ours-path-a}
    \end{subfigure}
    \begin{subfigure}[t]{0.49\linewidth}
    \begin{center}
    \includegraphics[width=\linewidth]{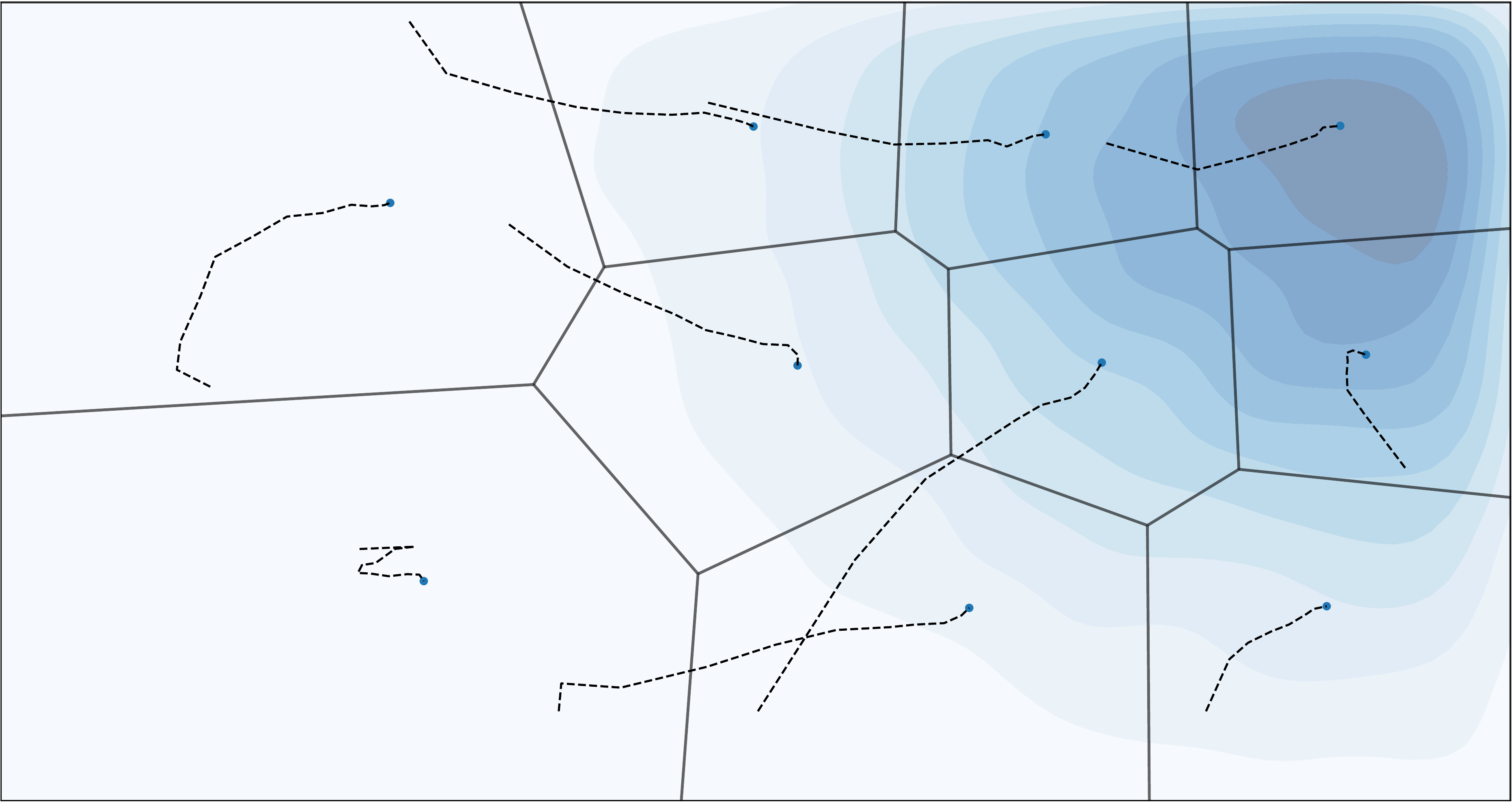}
    \caption{Lloyd's Algorithm}
    \end{center}
    \label{fig:lloyds-path-b}
    \end{subfigure}
    \caption{Final configuration of the robots for the two algorithms in a test environment}
    \label{fig:final_config_convex}
\end{figure}

\subsection{Non-Convex Environments}
In this section, we compare the solution quality of the proposed algorithm with two different communication ranges to the algorithms in~\cite{breitenmoser2010voronoi},~\cite{yun2014distributed} and the centralized algorithm~\cite{ahmadian2013local}. The experiment is conducted in a  $1500\times 850$ environment that contains obstacles (See Figure~\ref{fig:sample_Env_extended_sim-a}), and using $100$ different event distributions. The distributions are generated in the same manner as in the convex environment experiments with uniformly random mean and covariance matrices.
The communication model in the non-convex studies are different, for instance, two robots are neighbours in~\cite{breitenmoser2010voronoi}, if the intersection of the Voronoi cells of the robots in the environment without obstacles is non-empty, and two robots are neighbours in~\cite{yun2014distributed} if the two partitions of the robots share an edge in the discrete representation of the environment. Therefore in the implementations of algorithms in~\cite{breitenmoser2010voronoi} and \cite{yun2014distributed}, we assume that robots are connected to every other robot. 
\begin{figure}[t]
    \begin{subfigure}[t]{0.49\linewidth}
    \begin{center}
    \includegraphics[angle=90, width=.7\linewidth]{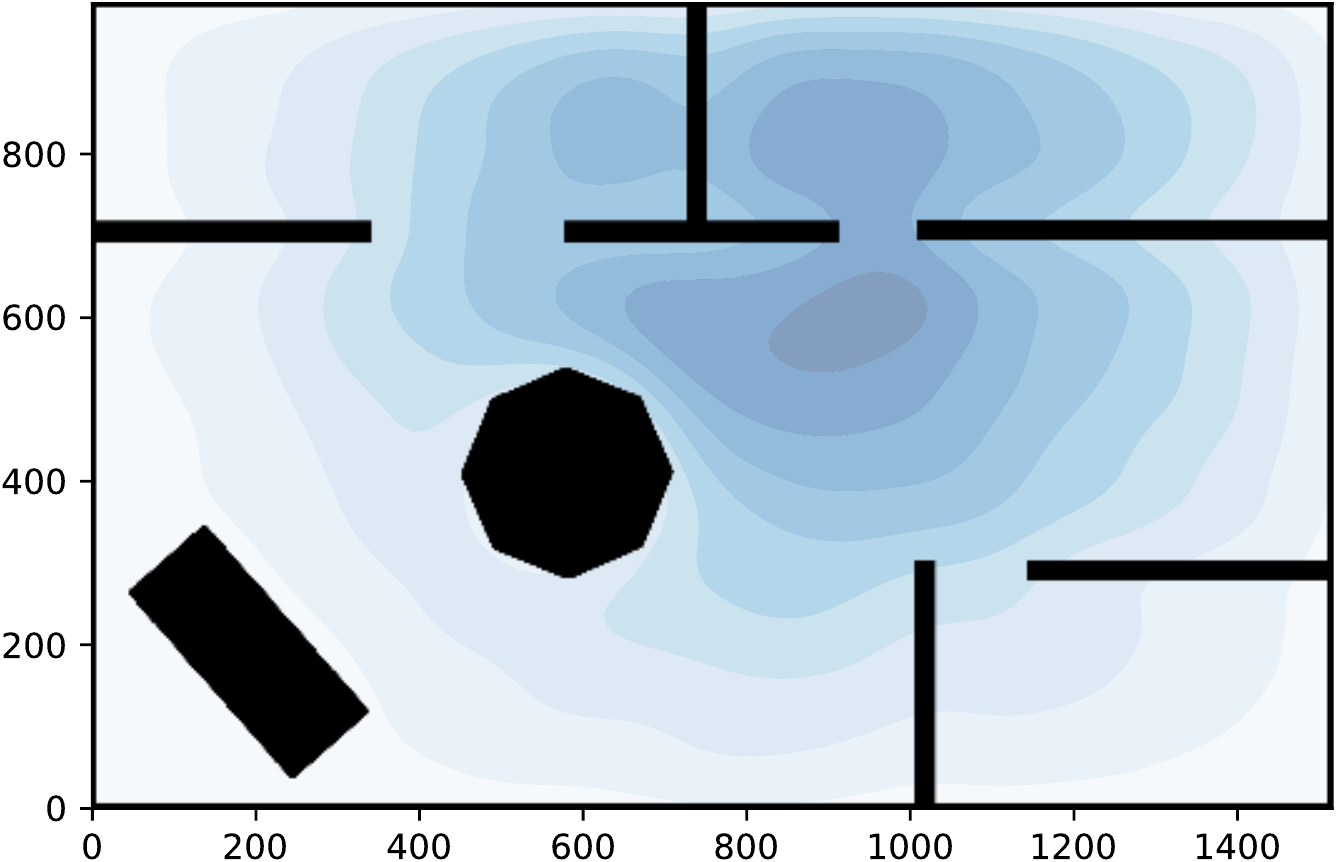}
    \end{center}
    \caption{Sample Environment with Obstacles}
    \label{fig:sample_Env_extended_sim-a}
    \end{subfigure}
    \begin{subfigure}[t]{0.49\linewidth}
    \begin{center}
    \includegraphics[angle=90, width=.7\linewidth]{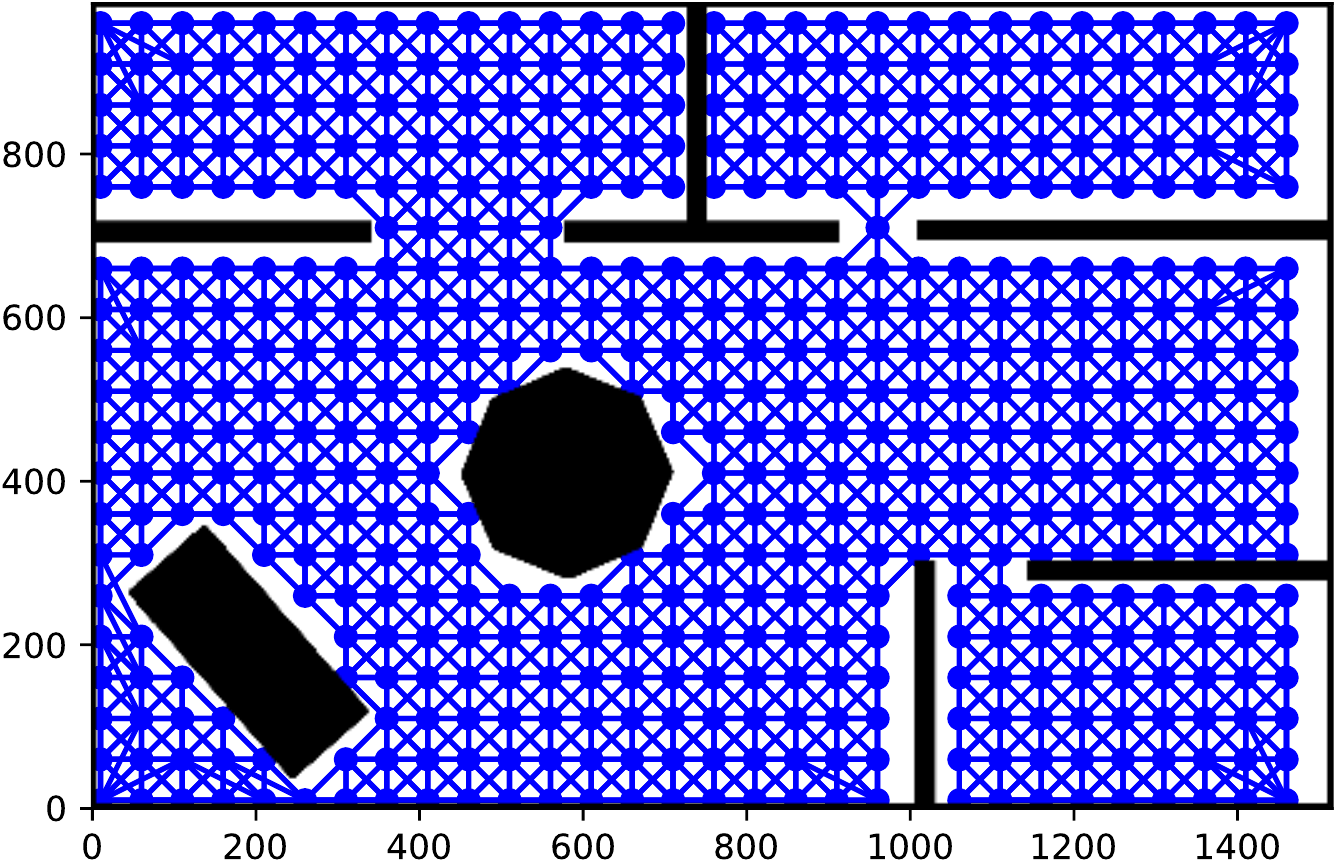}
    \end{center}
    \caption{Discrete representation of the environment}
    \label{fig:sample_Env_graph_extended_sim-b}
    \end{subfigure}
    \caption{A sample environment with obstacles and its discrete representation}
    \label{fig:aaaaa-a}
\end{figure}

 Figure~\ref{fig:non_convex_diff_robots} shows the percentage difference between the solutions of each algorithm compared to the centralized algorithm. Observe that the proposed algorithm even with the conventional communication range out-performs both other algorithms by $\approx 20\%$ on average in a system with $30$ robots and matches the solution quality of the centralized algorithm. Figure~\ref{fig:non_convex_paths} illustrates the final configuration and the movement of the robots using the proposed algorithm in a non-convex environment.
\begin{figure}
    \centering
    \includegraphics[width=.95\textwidth]{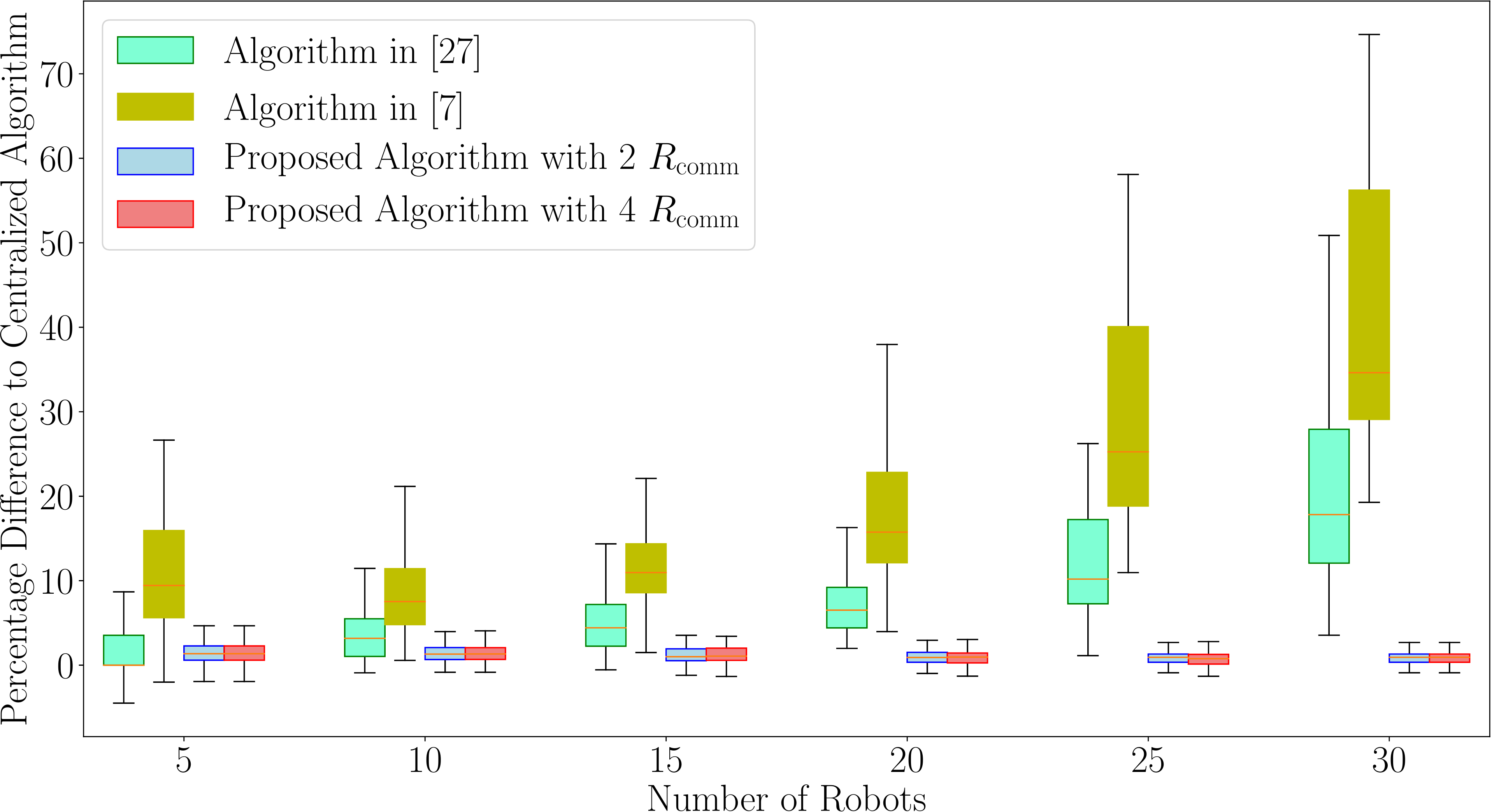}
    \caption{Percentage difference of the solutions of different algorithms to the solution of the centralized algorithm}
    \label{fig:non_convex_diff_robots}
\end{figure}

\begin{figure}
\centering
    \includegraphics[angle=-90,width=.7\textwidth]{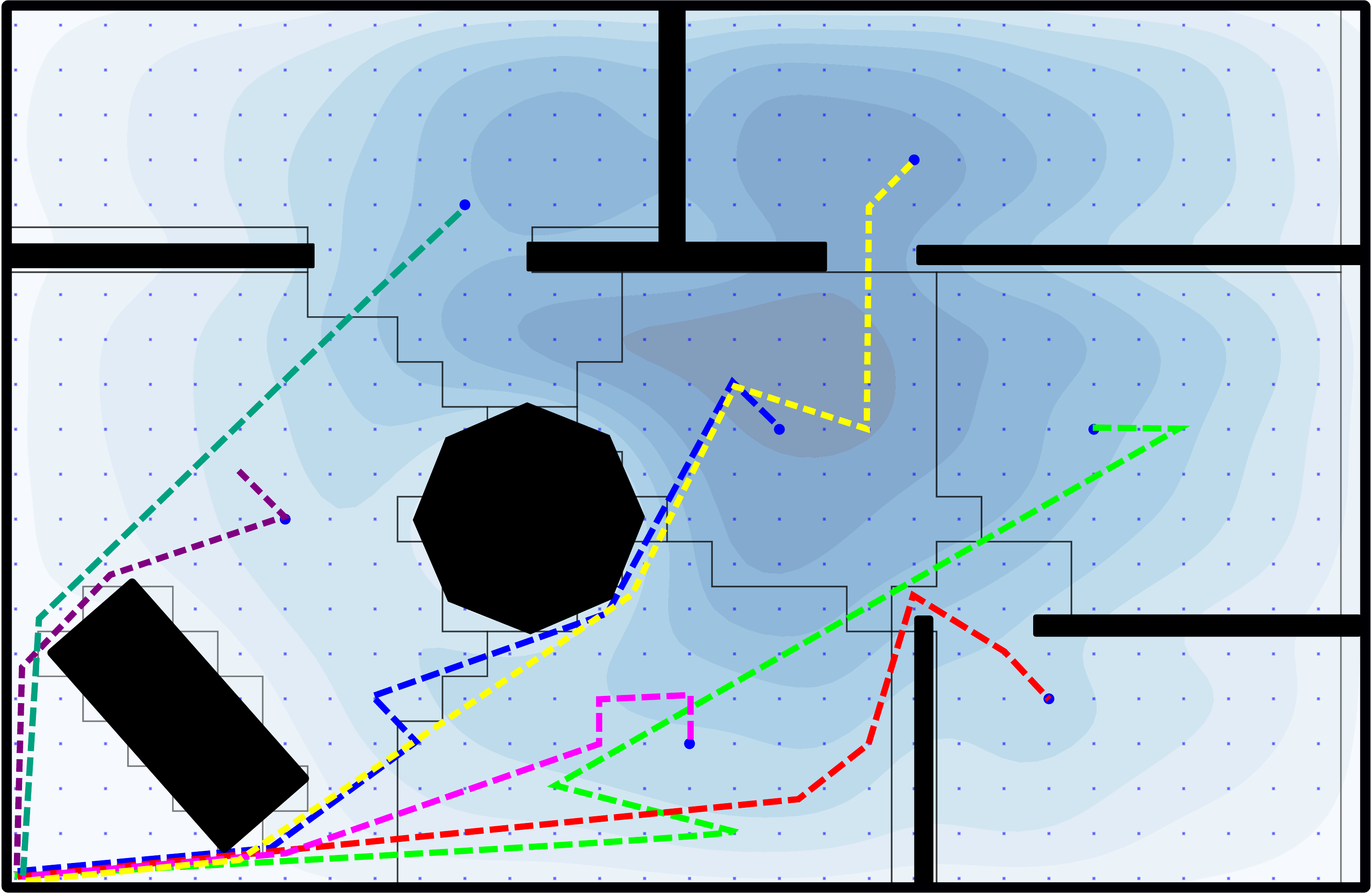}
    \caption{Robot movements in a non-convex environment using the proposed distributed algorithm}
    \label{fig:non_convex_paths}
\end{figure}

\subsection{Examples of Different Local Solutions}
In this section, we provide two examples for the algorithms in~\cite{yun2014distributed} and~\cite{breitenmoser2010voronoi}. We illustrate the locally optimal solution reached using these algorithms, and the local moves considered in the proposed algorithm which helps escaping these sub-optimal solutions.

\begin{figure}
    \begin{subfigure}[t]{0.48\linewidth}
    \begin{center}
    \includegraphics[angle=-90, width=\linewidth]{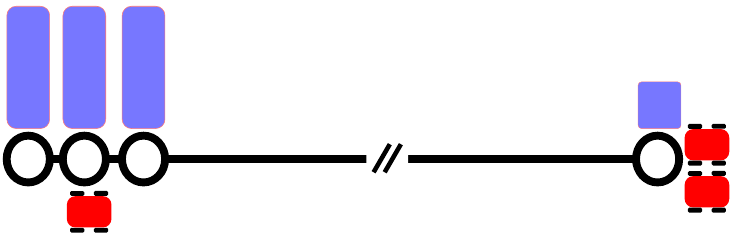}
    \caption{A test environment for algorithm in~\cite{yun2014distributed}}
    \label{fig:discretization-counter-a}
    \end{center}
    
    \end{subfigure}
    \begin{subfigure}[t]{0.48\linewidth}
    \begin{center}
    \includegraphics[angle=-90, width=\linewidth]{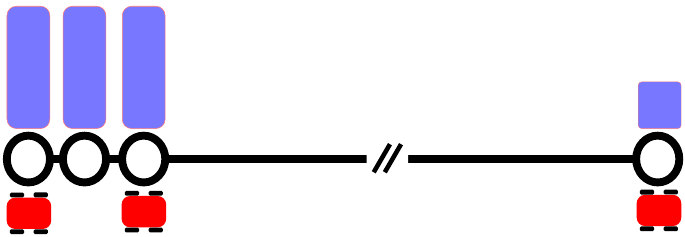}
    \caption{Final configuration with proposed algorithm}
    \label{fig:discretization-counter-b}
    \end{center}
    \end{subfigure}
    \caption{A discrete test environment and the final configurations of algorithm in~\cite{yun2014distributed} and the proposed algorithm}
    \label{fig:discretization-counter}
\end{figure}

Consider a discrete coverage problem with $4$ vertices and $3$ robots initialized at the configuration shown in Figure~\ref{fig:discretization-counter-a}. The bars on the vertices of the graph represents the weight of the vertices. By the communication model in~\cite{yun2014distributed}, all the robots are neighbours of each other. The local move in algorithm in~\cite{yun2014distributed} moves the robots inside their partitions if the move improves the sensing cost of its partition and the neighbouring partitions. Note that the initialized configuration of the robots is a locally optimal solution for the algorithm in~\cite{yun2014distributed}. However, in the proposed algorithm the robots will improve on current configuration with performing single-hop move type II. Figure~\ref{fig:discretization-counter-b} shows the final configuration with the proposed algorithm.

Figure~\ref{fig:schwager_paper_counter} shows a continuous environment with a single robot. The sensing cost of an event is a function of the geodesic distance from the robot. The high-level idea of the algorithm in~\cite{breitenmoser2010voronoi} is to find the centroid in the environment without obstacles (see $t_{virt}$ in Figure~\ref{fig:schwager_paper_counter}) and if the centroid is inside an obstacle, then the algorithm projects the centroid to a face of the obstacle (see $t_{real}$ in Figure~\ref{fig:schwager_paper_counter}) and moves the robot towards $t_{real}$. Observe that in the scenarios where the sensing cost is a function of the length of the shortest path between the robot and the event location, the projection of the centroid may result in sub-optimal solutions. However, the proposed algorithm avoids these scenarios by solving the coverage problem on a discrete representation of the environment.

\begin{figure}
    \centering
    \includegraphics[width=.8\textwidth]{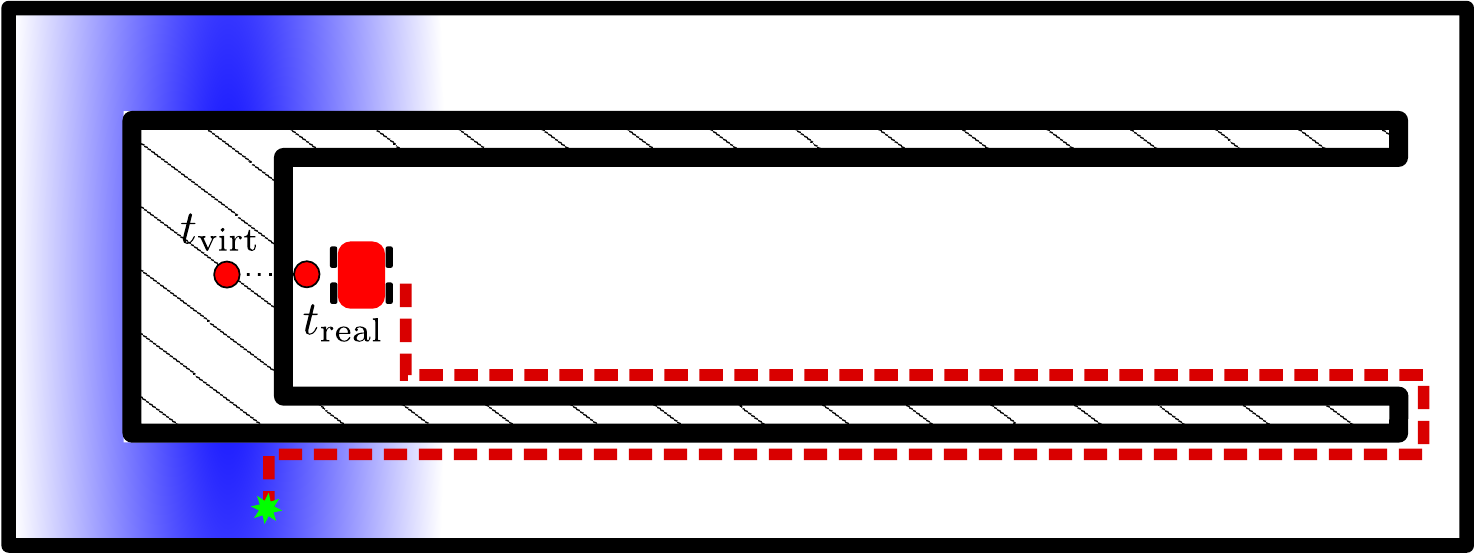}
    \caption{A test environment for Algorithm~\cite{breitenmoser2010voronoi}}
    \label{fig:schwager_paper_counter}
\end{figure}

\section{Conclusion}
This paper considered the multi-robot coverage problem in convex and non-convex environments. A connection is established between the solution quality of the continuous coverage problem and the solution to the coverage problem on a discrete representation of the environment. We also propose the first distributed approximation algorithm for the coverage problem in discrete and continuous environments and provide bound on the quality of the solution. We also characterize the run-time and communication complexity of the proposed algorithm.


\begin{thebibliography}{10}
\providecommand{\url}[1]{{#1}}
\providecommand{\urlprefix}{URL }
\expandafter\ifx\csname urlstyle\endcsname\relax
  \providecommand{\doi}[1]{DOI~\discretionary{}{}{}#1}\else
  \providecommand{\doi}{DOI~\discretionary{}{}{}\begingroup
  \urlstyle{rm}\Url}\fi

\bibitem{ahmadian2013local}
Ahmadian, S., Friggstad, Z., Swamy, C.: Local-search based approximation
  algorithms for mobile facility location problems.
\newblock In: Proceedings of the 24th annual ACM-SIAM symposium on Discrete
  algorithms, pp. 1607--1621. SIAM (2013)

\bibitem{alitappeh2017multi}
Alitappeh, R.J., Pereira, G.A., Ara{\'u}jo, A.R., Pimenta, L.C.: Multi-robot
  deployment using topological maps.
\newblock Journal of Intelligent \& Robotic Systems \textbf{86}(3-4), 641--661
  (2017)

\bibitem{arya2004local}
Arya, V., Garg, N., Khandekar, R., Meyerson, A., Munagala, K., Pandit, V.:
  Local search heuristics for k-median and facility location problems.
\newblock SIAM Journal on computing \textbf{33}(3), 544--562 (2004)

\bibitem{balcan2013distributed}
Balcan, M.F.F., Ehrlich, S., Liang, Y.: Distributed $ k $-means and $ k
  $-median clustering on general topologies.
\newblock In: Advances in Neural Information Processing Systems, pp. 1995--2003
  (2013)

\bibitem{bertsimas1991stochastic}
Bertsimas, D.J., Van~Ryzin, G.: A stochastic and dynamic vehicle routing
  problem in the {E}uclidean plane.
\newblock Operations Research \textbf{39}(4), 601--615 (1991)

\bibitem{bhattacharya2013distributed}
Bhattacharya, S., Michael, N., Kumar, V.: Distributed coverage and exploration
  in unknown non-convex environments.
\newblock In: Distributed Autonomous Robotic Systems, pp. 61--75. Springer
  (2013)

\bibitem{breitenmoser2010voronoi}
Breitenmoser, A., Schwager, M., Metzger, J.C., Siegwart, R., Rus, D.: Voronoi
  coverage of non-convex environments with a group of networked robots.
\newblock In: IEEE International Conference on Robotics and Automation, pp.
  4982--4989 (2010)

\bibitem{bullo2011dynamic}
Bullo, F., Frazzoli, E., Pavone, M., Savla, K., Smith, S.L.: Dynamic vehicle
  routing for robotic systems.
\newblock Proceedings of the IEEE \textbf{99}(9), 1482--1504 (2011)

\bibitem{caicedo2008coverage}
Caicedo-N{\'u}nez, C.H., Zefran, M.: A coverage algorithm for a class of
  non-convex regions.
\newblock In: IEEE Conference on Decision and Control, pp. 4244--4249 (2008)

\bibitem{caicedo2008performing}
Caicedo-Nunez, C.H., Zefran, M.: Performing coverage on nonconvex domains.
\newblock In: IEEE International Conference on Control Applications, pp.
  1019--1024 (2008)

\bibitem{cortes2004coverage}
Cortes, J., Martinez, S., Karatas, T., Bullo, F.: Coverage control for mobile
  sensing networks.
\newblock IEEE Transactions on Robotics and Automation \textbf{20}(2), 243--255
  (2004)

\bibitem{curtin1993autonomous}
Curtin, T.B., Bellingham, J.G., Catipovic, J., Webb, D.: Autonomous
  oceanographic sampling networks.
\newblock Oceanography \textbf{6}(3), 86--94 (1993)

\bibitem{durham2012discrete}
Durham, J.W., Carli, R., Frasca, P., Bullo, F.: Discrete partitioning and
  coverage control for gossiping robots.
\newblock IEEE Transactions on Robotics \textbf{28}(2), 364--378 (2012)

\bibitem{jain2001approximation}
Jain, K., Vazirani, V.V.: Approximation algorithms for metric facility location
  and k-median problems using the primal-dual schema and lagrangian relaxation.
\newblock Journal of the ACM (JACM) \textbf{48}(2), 274--296 (2001)

\bibitem{kantaros2014visibility}
Kantaros, Y., Thanou, M., Tzes, A.: Visibility-oriented coverage control of
  mobile robotic networks on non-convex regions.
\newblock In: IEEE International Conference on Robotics and Automation, pp.
  1126--1131 (2014)

\bibitem{lavalle2006planning}
LaValle, S.M.: Planning algorithms.
\newblock Cambridge university press (2006)

\bibitem{lemaire2004distributed}
Lemaire, T., Alami, R., Lacroix, S.: A distributed tasks allocation scheme in
  multi-uav context.
\newblock In: IEEE International Conference on Robotics and Automation, vol.~4,
  pp. 3622--3627 (2004)

\bibitem{li2016approximating}
Li, S., Svensson, O.: Approximating k-median via pseudo-approximation.
\newblock SIAM Journal on Computing \textbf{45}(2), 530--547 (2016)

\bibitem{mahboubi2012distributed}
Mahboubi, H., Sharifi, F., Aghdam, A.G., Zhang, Y.: Distributed coordination of
  multi-agent systems for coverage problem in presence of obstacles.
\newblock In: IEEE American Control Conference, pp. 5252--5257 (2012)

\bibitem{meguerdichian2001exposure}
Meguerdichian, S., Koushanfar, F., Qu, G., Potkonjak, M.: Exposure in wireless
  ad-hoc sensor networks.
\newblock In: Proceedings of the 7th Annual International Conference on Mobile
  Computing and Networking, pp. 139--150. ACM (2001)

\bibitem{sadeghi2018re}
Sadeghi, A., Smith, S.L.: Re-deployment algorithms for multiple service robots
  to optimize task response.
\newblock In: 2018 IEEE International Conference on Robotics and Automation
  (ICRA), pp. 2356--2363. IEEE (2018)

\bibitem{sadeghi2019coverage}
Sadeghi, A., Smith, S.L.: Coverage control for multiple event types with
  heterogeneous robots.
\newblock In: IEEE International Conference on Robotics and Automation, pp.
  3377--3383 (2019)

\bibitem{santos2018coverage}
Santos, M., Diaz-Mercado, Y., Egerstedt, M.: Coverage control for multirobot
  teams with heterogeneous sensing capabilities.
\newblock IEEE Robotics and Automation Letters \textbf{3}(2), 919--925 (2018)

\bibitem{shmoys2000approximation}
Shmoys, D.B.: Approximation algorithms for facility location problems.
\newblock In: International Workshop on Approximation Algorithms for
  Combinatorial Optimization, pp. 27--32. Springer (2000)

\bibitem{thanou2013distributed}
Thanou, M., Stergiopoulos, Y., Tzes, A.: Distributed coverage using geodesic
  metric for non-convex environments.
\newblock In: IEEE International Conference on Robotics and Automation, pp.
  933--938 (2013)

\bibitem{wang2011coverage}
Wang, B.: Coverage problems in sensor networks: A survey.
\newblock ACM Computing Surveys (CSUR) \textbf{43}(4), 1--53 (2011)

\bibitem{yun2014distributed}
Yun, S.k., Rus, D.: Distributed coverage with mobile robots on a graph:
  locational optimization and equal-mass partitioning.
\newblock Robotica \textbf{32}(2), 257--277 (2014)

\end{thebibliography}
\end{document}